\newcolumntype{L}{X}
\newcolumntype{R}{>{\raggedleft\arraybackslash}X}
\newcolumntype{C}{>{\centering\arraybackslash}X}
\let\llncssubparagraph\subparagraph
\let\subparagraph\paragraph
\let\subparagraph\llncssubparagraph
\crefname{theorem}{Theorem}{Theorems}\Crefname{theorem}{Theorem}{Theorems}
\crefname{lemma}{Lemma}{Lemmas}\Crefname{lemma}{Lemma}{Lemmas}
\crefname{definition}{Definition}{Definitions}\Crefname{definition}{Definition}{Definitions}
\crefname{figure}{Figure}{Figures}\Crefname{figure}{Figure}{Figures}
\crefname{remark}{Remark}{Remarks}\Crefname{remark}{Remark}{Remarks}
\renewcommand{\baselinestretch}{0.99} 
\newcommand*{\edgestatus}[3]{\tikz[baseline=(char.base)]{\node[shape=rectangle,rounded corners=2,fill=#3,draw,text=#2,inner sep=1pt,outer sep=0pt] (char) {#1\strut};}}
\newcommand*{\edgeprohibited}[1]{\edgestatus{#1}{white}{black}}
\newcommand*{\edgerequired}[1]{\edgestatus{#1}{black}{white}}
\newcommand*{\edgepriority}[1]{\tikz[baseline=(char.base)]{\node[shape=rectangle,rounded corners=2,fill=black!10,inner sep=1pt,outer sep=0pt] (char) {#1\strut};}}
\title{Index appearance record for transforming Rabin~automata into parity automata}
\author{Jan~K{\v r}et\'insk\'y \and Tobias~Meggendorfer \and Clara~Waldmann \and Maximilian~Weininger}
\institute{Technical University of Munich}
\begin{document}

\maketitle

\vspace*{-1em}

\begin{abstract}
	Transforming deterministic $\omega$-automata into deterministic parity automata is traditionally done using variants of appearance records.
	We present a more efficient variant of this approach, tailored to Rabin automata, and several optimizations applicable to all appearance records.
	We compare the methods experimentally and find out that our method produces smaller automata than previous approaches.
	Moreover, the experiments demonstrate the potential of our method for LTL synthesis, using LTL-to-Rabin translators.
	It leads to significantly smaller parity automata when compared to state-of-the-art approaches on complex formulae.
\end{abstract}


\section{Introduction}

Constructing correct-by-design systems from specifications given in linear temporal logic (LTL) \cite{DBLP:conf/focs/Pnueli77} is a classical problem \cite{DBLP:conf/popl/PnueliR89}, called \emph{LTL synthesis}.
The automata-theoretic solution to this problem is to translate the LTL formula to a deterministic automaton and solve the corresponding game on the automaton. 
Although different kinds of automata can be used, a reasonable choice would be parity automata (DPA) due to the practical efficiency of parity game solvers \cite{DBLP:conf/atva/FriedmannL09,michael} and the fact they allow for optimal memoryless strategies.
The bottleneck is thus to create a reasonably small DPA.
The classical way to transform LTL formulae into DPA is to first create a non-deterministic Büchi automaton (NBA) and then determinize it, as implemented in \texttt{ltl2dstar} \cite{ltl2dstar}.
Since determinization procedures \cite{DBLP:conf/lics/Piterman06,DBLP:conf/fossacs/Schewe09} based on Safra's construction \cite{DBLP:conf/focs/Safra88} are practically inefficient, many alternative approaches to LTL synthesis arose, trying to avoid determinization and/or focusing on fragments of LTL, e.g. \cite{DBLP:conf/focs/KupfermanV05,DBLP:conf/vmcai/PitermanPS06,DBLP:journals/tocl/AlurT04}.
However, new results on translating LTL directly and efficiently into deterministic automata \cite{DBLP:conf/cav/KretinskyE12,DBLP:conf/cav/EsparzaK14} open new possibilities for the automata-theoretic approach.
Indeed, tools such as Rabinizer \cite{DBLP:conf/atva/KomarkovaK14} or LTL3DRA \cite{DBLP:conf/atva/BabiakBKS13} can produce practically small deterministic Rabin automata (DRA).
Consequently, the task is to efficiently transform DRA into DPA, which is the aim of this paper.

Transformations of deterministic automata into DPA are mostly based on \emph{appearance records} \cite{DBLP:conf/stoc/GurevichH82}.
For instance, for deterministic Muller automata, we want to track which states appear infinitely often and which do not.
In order to do that, the \emph{state appearance record} keeps a permutation of the states, ordered according to their most recent visits, see e.g. \cite{DBLP:conf/dagstuhl/Schwoon01}.
In contrast, for deterministic Streett automata (DSA) we only want to track which \emph{sets} of states are visited infinitely often and which not.
Consequently, \emph{index appearance record} (IAR) keeps a permutation of these sets of interest instead, which are typically very few. Such a transformation has been given first in \cite{DBLP:conf/stoc/Safra92} from DSA to DRA only (not DPA, which is a subclass of DRA).
Fortunately, this construction can be further modified into a transformation of DSA to DPA, as shown in \cite{loding-thesis}.

Since 1) DRA and DSA are syntactically the same, recognizing the complement languages of each other, and 2) DPA can be complemented without any cost, one can apply the IAR of \cite{loding-thesis} to DRA, too.
However, we design another IAR, which is more natural from the DRA point of view, as opposed to the DSA perspective taken in \cite{loding-thesis}.
This is in spirit more similar to a sketch of a construction suggested in \cite{exercises}.
Surprisingly, we have found that the DRA perspective yields an algorithm producing considerably smaller automata than the DSA perspective.

Our contribution in this paper is as follows:
\begin{itemize}
	\item We provide an IAR construction transforming DRA to DPA.
	\item We present optimizations applicable to all appearance records.
	\item We evaluate all the unoptimized and optimized versions of our IAR and the IAR of \cite{loding-thesis} experimentally, in comparison to the procedure implemented in GOAL \cite{DBLP:conf/cav/TsaiTH13}.
	\item We compare our approach LTL$\xrightarrow{\text{Rabinizer}}$DRA$\xrightarrow{\text{optimized IAR}}$DPA to the state-of-the-art translation of LTL to DPA by Spot 2.1 \cite{spot2}, which mixes the construction of \cite{DBLP:journals/fuin/Redziejowski12} with some optimizations of \texttt{ltl2dstar} \cite{ltl2dstar} and of their own.
	The experiments show that for more complex formulae our method produces smaller automata.
\end{itemize}

\section{Preliminaries on $\omega$-automata} \label{section:definitions}
\vspace{-1pt}

We recall basic definitions of $\omega$-automata and establish some notation.

\vspace{-1pt}
\subsection{Alphabets and words} \label{section:definitions:alphabets}

An \emph{alphabet} is any finite set $\Sigma$.
The elements of $\Sigma$ are called \emph{letters}.
A \emph{word} is a (possibly infinite) sequence of letters.
The set of all infinite words is denoted by $\Sigma^\omega$.
A set of words $\Language \subseteq \Sigma^\omega$ is called \emph{(infinite) language}.
The $i$-th letter of a word $w \in \Sigma^\omega$ is denoted by $w_i$, \abbrevie $w = w_0 w_1 \dots$.

\vspace{-1pt}
\subsection{Transition systems} \label{section:definitions:transitions}
\vspace{-1pt}

A \emph{deterministic transition system} (DTS) $\detTransSystem$ is given by a tuple $(Q, \Sigma, \delta, q_0)$ where $Q$ is a set of states, $\Sigma$ is an alphabet, $\delta$ is a \emph{transition function} $\delta : Q \times \Sigma \to Q$ which may be partial (due to technical reasons) and $q_0 \in Q$ is the \emph{initial state}.
The transition function induces the \emph{set of transitions} $\Delta = \set*{\trans{q}{a}{q'} \mid q \in Q, a \in \Sigma, q' = \fun{\delta}{p, a}}$.
For a transition $t = \trans{q}{a}{q'} \in \Delta$ we say that $t$ \emph{starts at} $q$, \emph{moves under} $a$ and \emph{ends in} $q'$.
A sequence of transitions $\rho$ is a \emph{run} of a DTS $\detTransSystem$ on a word $w \in \Sigma^\omega$ if $\rho_0$ starts at $q_0$, $\rho_i$ moves under $w_i$ for each $i \geq 0$ and $\rho_{i+1}$ starts at the same state as $\rho_i$ ends for each $i \geq 0$.
We write $\fun{\detTransSystem}{w}$ to denote the unique run of $\detTransSystem$ on $w$, if it exists.
A transition $t$ \emph{occurs} in $\rho$ if there is some $i$ with $\rho_i = t$.
By $\fun{\Inf}{\rho}$ we denote the set of all transitions occurring infinitely often in $\rho$.
Additionally, we extend $\Inf$ to words by defining $\fun{\Inf_{\detTransSystem}}{w} = \fun{\Inf}{\fun{\detTransSystem}{w}}$ if $\detTransSystem$ has a run on $w$.
If $\detTransSystem$ is clear from the context, we write $\fun{\Inf}{w}$ for $\fun{\Inf_{\detTransSystem}}{w}$.

\vspace{-1pt}
\subsection{Acceptance conditions and $\omega$-automata} \label{section:definitions:automata}
\vspace{-1pt}

An \emph{acceptance condition} for $\detTransSystem$ is a positive Boolean formula over the formal variables  $V_\Delta = \set{\fun{\Inf}{T}, \fun{\Fin}{T} \mid T \subseteq \Delta}$.
Acceptance conditions are interpreted over runs as follows.
Given a run $\rho$ of $\detTransSystem$ and such an acceptance condition $\alpha$, we consider the truth assignment that sets the variable $\fun{\Inf}{T}$ to true iff $\rho$ visits (some transition of) $T$ infinitely often, \abbrevie $\fun{\Inf}{\rho} \intersection T \neq \emptyset$.
Dually, $\fun{\Fin}{T}$ is set to true iff $\rho$ visits every transition in $T$ finitely often, \abbrevie $\fun{\Inf}{\rho} \intersection T = \emptyset$.
A run $\rho$ satisfies $\alpha$ if this truth-assignment evaluates $\alpha$ to true.

A \emph{deterministic $\omega$-automaton} over $\Sigma$ is a tuple $\Automaton = (Q, \Sigma, \delta, q_0, \alpha)$, where $\tuple*{Q, \Sigma, \delta, q_0}$ is a DTS and $\alpha$ is an acceptance condition for it.
An automaton $\Automaton$ \emph{accepts} a word $w \in \Sigma^\omega$ if the run of the automaton on $w$ satisfies $\alpha$.
The language of $\Automaton$, denoted by $\Language(\Automaton)$, is the set of words accepted by $\Automaton$.
An acceptance condition $\alpha$ is a
\begin{itemize}
	\item \emph{Rabin condition $\set*{\tuple*{F_i, I_i}}_{i=1}^k$} if $\alpha = \boolOr_{i=1}^k (\fun{\Fin}{F_i} \booland \fun{\Inf}{I_i})$.
	Each $\tuple{F_i, I_i}$ is called a \emph{Rabin pair}, where the $F_i$ and $I_i$ are called the \emph{prohibited set} and the \emph{required set} respectively.
	\item \emph{generalized Rabin condition $\set*{\tuple*{F_i,\set*{I_i^j}_{j=1}^{k_i}}}_{i=1}^k$} if the acceptance condition is of the form $\alpha = \boolOr_{i=1}^n (\fun*{\Fin}{F_i} \booland \boolAnd_{j=1}^{k_i} \fun*{\Inf}{I_j^k)}$.
	This generalizes the Rabin condition, where each $k_i=1$.
	Furthermore, every generalized Rabin automaton can be de-generalized into an equivalent Rabin automaton, which however may incur an exponential blow-up \cite{DBLP:conf/cav/KretinskyE12}.
	\item \emph{Streett condition $\set*{\tuple*{F_i, I_i}}_{i=1}^k$} if $\alpha = \boolAnd_{i=1}^k (\fun{\Inf}{F_i} \boolor \fun{\Fin}{I_i})$.
	Note that the Streett condition is exactly the negation of the Rabin condition and thus an automaton with a Rabin condition can be interpreted as a Streett automaton recognizing exactly the complement language.
	\item \emph{Rabin chain condition $\set*{\tuple*{F_i, I_i}}_{i=1}^k$} if it is a Rabin condition and $F_1 \subseteq I_1 \subseteq \cdots \subseteq F_k \subseteq I_k$.
	A Rabin chain condition is equivalent to a \emph{parity condition}, specified by a priority assignment $\lambda : \Delta \to \Naturals$.
	Such a parity condition is satisfied by a run $\rho$ iff the maximum priority of all infinitely often visited transitions $\max \set*{\fun{\lambda}{q} \mid q \in \fun{\Inf}{\rho}}$ is even.
\end{itemize}
A deterministic Rabin, generalized Rabin, Street or parity automaton is a deterministic $\omega$-automaton with an acceptance condition of the corresponding kind.
In the rest of the paper we use the corresponding abbreviations DRA, DGRA, DSA and DPA.

Furthermore, given a DRA with an acceptance set $\set{\tuple{F_i, I_i}}_{i=1}^k$ and a word $w \in \Sigma^\omega$, we write $\ProhibitedInf = \set{F_i \mid F_i \intersection \fun{\Inf}{w} \neq \emptyset}$ and $\RequiredInf = \set{I_i \mid I_i \intersection \fun{\Inf}{w} \neq \emptyset}$ to denote the set of all infinitely often visited prohibited and required sets, respectively.
\section{Index appearance record} \label{section:index_appearance_record}

In order to translate (state-based acceptance) Muller automata to parity automata, a construction called \emph{latest appearance record} has been devised\footnote{Originally, it appeared in an unpublished report of McNaughton under the name \enquote{order vector with hit}}.
In essence, the constructed state space consists of permutations of all states in the original automaton.
In each transition, the state which has just been visited is moved to the front of the permutation.
From this, one can deduce the set of all infinitely often visited states by investigating which states change their position in the permutation infinitely often along the run of the word.
Such a constraint can be encoded as parity condition.

However, this approach comes with a very fast growing state space, as the amount of permutations grows exponentially.
Moreover, applying this idea to transition based acceptance leads to even faster growth, as there usually are a lot more transitions than states.
In contrast to Muller automata, the exact set of infinitely often visited transitions is not needed to decide acceptance of a word by a Rabin automaton.
It is sufficient to know which of the prohibited and required \emph{sets} are visited infinitely often.
Hence, \emph{index appearance record} uses the indices of the Rabin pairs instead of particular states in the permutation construction.
This provides enough information to decide acceptance.

We introduce some formalities regarding permutations: For a given $n \in \Naturals$, we use $\Pi^n$ to denote the set of all permutations of $N = \set{1, \dots, n}$, \abbrevie the set of all bijective functions $\pi : N \to N$.
We identify $\pi$ with its canonical representation as a vector $\tuple{\pi(1), \dots, \pi(n)}$.
In the following, we will often say \enquote{the position of $F_i$ in $\pi$} or similar to refer to the position of $i$ in a particular $\pi$, \abbrevie $\fun{\pi^{-1}}{i}$.
With this, we define our variant of the index appearance record construction.
Note that in contrast to previous constructions, ours is transition based, which also has a positive effect on the size of the produced automata, as discussed in our experimental results.
\begin{definition}[Transition-based index appearance record for Rabin automata] \label{def:IAR}
	Let $\mathcal{R} = \tuple*{Q, \Sigma, \delta, q_0, \set*{\tuple*{F_i, I_i}}_{i=1}^k}$ be a Rabin automaton.
	Then the \emph{index appearance record automaton} $\fun{\IAR}{\mathcal{R}} = \tuple*{\tilde{Q}, \Sigma, \tilde{\delta}, \tilde{q}_0, \lambda}$ is defined as the parity automaton with
	\begin{itemize}
		\item $\tilde{Q} = Q \times \Pi^k$.
		\item $\tilde{q}_0 = \tuple{q_0, \tuple*{1, \dots, k}}$.
		\item $\fun{\tilde{\delta}}{\tuple{q, \pi}, a} = \tuple{\fun{\delta}{q, a}, \pi'}$ where $\pi'$ is the permutation obtained from $\pi$ by moving all indices of \emph{prohibited} sets visited by the transition $t = \trans{q}{a}{\fun{\delta}{q, a}}$ to the front.
		Formally, let $\mathsf{Move} = \set*{i \mid t \in F_{\fun{\pi}{i}}}$ be the set of positions of currently visited prohibited sets.
		If $\mathsf{Move} = \emptyset$, define $\pi' = \pi$, otherwise let $n = \cardinality{\mathsf{Move}}$ and $\mathsf{Move} = \set*{i_1, \dots, i_n}$.
		With this
		\begin{equation*}
			\pi'(j) = \begin{dcases*}
				i_j & if $j \leq n$\\
				\fun\pi{j - n + \cardinality*{\set*{i \in \mathsf{Move} \mid i \leq j}}} & otherwise.
			\end{dcases*}
		\end{equation*}
		\item To define the priority assignment, we first introduce some auxiliary notation. For a transition $\tilde{t} = \trans{\tuple{q, \pi}}{a}{\tuple{q', \pi'}}$ and its corresponding transition $\trans{q}{a}{q'}$ in the original automaton, let
		\[\fun{\maxInd}{\tilde{t}} = \fun{\max}{\set{\fun{\pi^{-1}}{i} \mid t \in F_i \union I_i} \union \set*{0}}\] be the maximal position of \emph{acceptance pair} in $\pi$ visited by $t$ (or $0$ if none is visited).
		Using this, define the priority assignment as follows:
		\begin{equation*}
			\fun{\lambda}{\tilde{t}} := \begin{dcases*}
				1                                     & if $\fun{\maxInd}{\tilde{t}} = 0$, \\
				2 \cdot \fun*{\maxInd}{\tilde{t}}     & if $t \in I_{\fun*{\pi}{\fun*{\maxInd}{\tilde{t}}}} \setminus F_{\pi(\fun*{\maxInd}{\tilde{t}})}$ \\
				2 \cdot \fun*{\maxInd}{\tilde{t}} + 1 & otherwise, i.e. if $t \in  F_{\pi(\fun*{\maxInd}{\tilde{t}})}$.
			\end{dcases*}
		\end{equation*}
	\end{itemize}
\end{definition}
When a transition visits multiple prohibited sets, they can be moved to the front of the appearance record in arbitrary order.
As an optimization we choose existing states as successors whenever possible.

Before formally proving correctness, \abbrevie that $\fun{\IAR}{\mathcal{R}}$ recognizes the same language as $\mathcal{R}$, we provide a small example in \cref{fig:iar_example} and explain the general intuition behind the construction.
For a given run, all prohibited sets which are visited infinitely often will eventually be \enquote{in front} of all those only seen finitely often: After some finite number of steps, none of the finitely often visited ones will be seen any more.
Taking another sufficiently large amount of steps, every infinitely often visited set has been seen again and all their indices have been moved to the front.
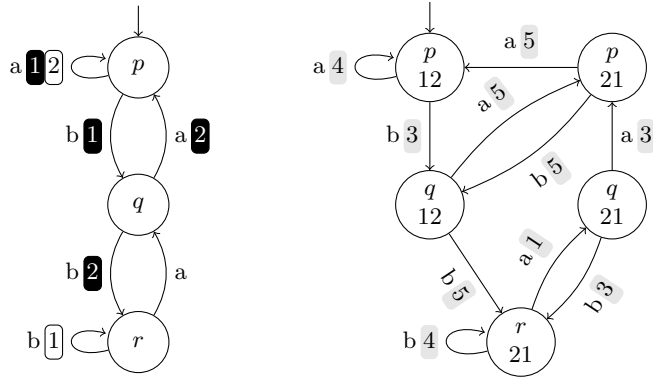
\begin{figure}[t]
	\centering
	\begin{tikzpicture}[auto,node distance=1cm,initial text=]
		\tikzstyle{every state}=[align=center]
		\node[state,initial above] (p)                {$p$};
		\node[state]               (q)   [below=of p] {$q$};
		\node[state]               (r)   [below=of q] {$r$};
		
		\node[state,initial above] (p12) [right=3cm of p]     {$p$\\$12$};
		\node[state]               (p21) [right=1.5cm of p12]  {$p$\\$21$};
		\node[state]               (q12) [right=3cm of q]     {$q$\\$12$};
		\node[state]               (q21) [right=1.5cm of q12]  {$q$\\$21$};
		\node                      (r_anchor) at ($(q12)!0.5!(q21)$) {};
		\node[state]               (r21) at (r_anchor |- r)  {$r$\\$21$};

		\path[->]
		(p)   edge [bend right,swap] node {b\,\edgeprohibited{1}} (q)
		      edge [loop left]       node {a\,\edgeprohibited{1}\edgerequired{2}} (p)
		(q)   edge [bend right,swap] node {a\,\edgeprohibited{2}} (p)
		      edge [bend right,swap] node {b\,\edgeprohibited{2}} (r)
		(r)   edge [bend right,swap] node {a} (q)
		      edge [loop left]       node {b\,\edgerequired{1}} (r)

		(p12) edge [loop left]       node {a\,\edgepriority{4}} (p12)
		      edge [swap]            node {b\,\edgepriority{3}} (q12)
		(q12) edge [bend left=15]    node[sloped,anchor=south,auto=false] {a\,\edgepriority{5}} (p21)
		      edge [swap]            node[sloped,anchor=north,auto=false] {b\,\edgepriority{5}} (r21)
		(r21) edge [loop left]       node {b\,\edgepriority{4}} (r21)
		      edge [bend left=15]    node[sloped,anchor=south,auto=false] {a\,\edgepriority{1}} (q21)
		(q21) edge [bend left=15]    node[sloped,anchor=north,auto=false] {b\,\edgepriority{3}} (r21)
		      edge [swap]            node {a\,\edgepriority{3}} (p21)
		(p21) edge [swap]            node {a\,\edgepriority{5}} (p12)
		      edge [bend left=15]    node[sloped,anchor=north,auto=false] {b\,\edgepriority{5}} (q12)
		;
	\end{tikzpicture}
	
	\caption{An example DRA and its resulting IAR DPA.
	For the Rabin automaton, a number in a white box next to a transition indicates that this transition is a required one of that Rabin pair.
	A black shape dually indicates membership in the corresponding prohibited set.
	For example, with $t = \trans{p}{a}{p}$ we have $t \in F_1$ and $t \in I_2$.
	In the IAR construction, we shorten the notation for permutations to save space, so $p, 12$ corresponds to $\tuple{p, \tuple{1, 2}}$.
	The priority of a transition is written next to the transitions letter.}
	\label{fig:iar_example}
\end{figure}

\begin{lemma}\label{stm:iar_inf_visited_indices}
	Let $w \in \Sigma^\omega$ be a word on which $\fun{\IAR}{\mathcal{R}}$ has a run $\tilde{\rho}$.
	Then, the positions of all finitely often visited prohibited sets stabilize after a finite number of steps, \abbrevie their positions are identical in all infinitely often visited states.
	Moreover, for any $i, j$ with $F_i \in \ProhibitedInf$, $F_j \notin \ProhibitedInf$ we have that the position of $F_i$ is smaller than the position of $F_j$ in every infinitely often visited state.
\end{lemma}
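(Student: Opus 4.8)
The plan is to pass to a suffix of $\tilde{\rho}$ after which nothing transient happens, and then exploit that the ``move-to-front'' operation can only change the relative order of two indices by bringing the moved one to the front. Since $\tilde{Q}$ is finite, there is an $N \in \Naturals$ such that for all steps $t \geq N$ the transition $\tilde{\rho}_t$ lies in $\fun{\Inf}{\tilde{\rho}}$. Because $\fun{\IAR}{\mathcal{R}}$ evolves its first component exactly like $\mathcal{R}$, the projection of $\tilde{\rho}$ is the run of $\mathcal{R}$ on $w$, and the underlying transition of an infinitely often visited IAR transition lies in $\fun{\Inf}{w}$. Recalling that a prohibited set is moved to the front at step $t$ exactly when the underlying transition belongs to it, this shows that after step $N$ only sets $F_i \in \ProhibitedInf$ are ever moved; equivalently, every $F_j \notin \ProhibitedInf$ is fixed by $\tilde{\delta}$ from step $N$ on, \abbrevie it is never moved again.

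The key observation, which I expect to be the crux, concerns a single transition. By definition the visited prohibited sets are placed on the front positions while all remaining indices keep their mutual order behind them. Consequently, for any two indices their relative order in $\pi$ and $\pi'$ can differ only if one of them is moved, in which case that index jumps ahead of the other; if neither is moved, their relative order is preserved. In particular, once an index has been brought in front of some index that is never moved again, it stays in front of it forever.

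I would then combine these as follows. Fix $F_i \in \ProhibitedInf$ and $F_j \notin \ProhibitedInf$. As $F_i \intersection \fun{\Inf}{w} \neq \emptyset$, some underlying transition of $F_i$ occurs infinitely often, so $F_i$ is moved to the front at some step $t_i \geq N$; at that step $F_i$ occupies a front position while the unmoved $F_j$ sits strictly behind, hence $F_i$ precedes $F_j$. Since $F_j$ is never moved after $N$, the observation above guarantees that $F_i$ precedes $F_j$ from $t_i$ on. Taking the maximum of the finitely many such $t_i$ yields a step $N'$ after which every set in $\ProhibitedInf$ precedes every set not in $\ProhibitedInf$; as the $\cardinality{\ProhibitedInf}$ sets of $\ProhibitedInf$ then occupy exactly the front positions, the position $\fun{\pi^{-1}}{i}$ of any $F_i \in \ProhibitedInf$ is smaller than $\fun{\pi^{-1}}{j}$ for any $F_j \notin \ProhibitedInf$. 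Every infinitely often visited state occurs at arbitrarily large steps, in particular after $N'$, and each state carries a fixed permutation, so this strict ordering holds in every infinitely often visited state, which is the ``moreover'' claim.

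The stabilization statement follows from the same picture. After step $N$ the mutual order of the sets outside $\ProhibitedInf$ is frozen, since none of them is ever moved, and after $N'$ all $\cardinality{\ProhibitedInf}$ infinitely often visited sets lie in front of them. Hence for each $F_j \notin \ProhibitedInf$ the number of indices preceding it, namely $\cardinality{\ProhibitedInf}$ plus the now-constant number of unmoved sets ahead of it, is eventually constant, so $\fun{\pi^{-1}}{j}$ no longer changes. As before, this common value is attained in every infinitely often visited state, which is exactly the asserted stabilization.
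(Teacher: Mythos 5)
Your proof is correct and follows essentially the same route as the paper's: both pass to a suffix of the run on which only infinitely often visited transitions occur and then analyse the move-to-front dynamics. The only difference is bookkeeping --- the paper tracks the absolute position of each finitely visited $F_j$ (after its last visit that position can only increase and, being bounded, stabilizes), whereas you track pairwise relative order via the invariant that an index placed in front of a never-again-moved index stays there; your version also spells out the \enquote{moreover} part, which the paper dismisses as trivial.
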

\begin{proof}
	The position of any $F_i$ only changes in two different ways:
	\begin{itemize}
		\item Either $F_i$ itself has been visited and thus is moved to the front,
		\item or some $F_{i'}$ with a position greater than the one of $F_i$ has been visited and is moved to the front, increasing the position of $F_i$.
	\end{itemize}
	Let $\rho$ be the run of $\mathcal{R}$ on $w$.
	(We prove the existence of such a run in \cref{stm:iar_star_simulates_rabin_run}.)
	Assume that $F_i$ is visited finitely often in some run $\rho$, \abbrevie there is a step in the run from which on $F_i$ is never visited again.
	As the amount of positions is bounded, the second case may only occur finitely often after this step and the position of $F_i$ eventually remains constant.
	As $F_i$ was chosen arbitrarily, we conclude that all finitely often visited $F_i$ are eventually moved to the right and remain on their position.
	Trivially, all infinitely often visited $F_i$ move to the left, proving the claim.\qed
\end{proof}
As an immediate consequence we see that if some transition $(q,a,q') \in F_i$ is visited infinitely often, then every $F_j$ with a smaller position than $F_i$ in $q$ is also visited infinitely often:
\begin{corollary}\label{stm:smaller_indices_visited_inf_often}
	Let $\tilde{t} \in \Inf_{\IAR(\mathcal{R})}(w)$ 
	be an infinitely often visited transition with its corresponding transition $t \in F_{\fun{\pi}{i}}$ for some $i$.
	Then $\Forall{j \leq i}{F_{\fun{\pi}{j}} \in \ProhibitedInf}$.
\end{corollary}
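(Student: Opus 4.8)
The plan is to derive the statement directly from \cref{stm:iar_inf_visited_indices}, which already does all the real work. First I would record two elementary consequences of $\tilde{t}$ being visited infinitely often: its source state $\tuple{q, \pi}$ is itself an infinitely often visited state of $\fun{\IAR}{\mathcal{R}}$, and the corresponding transition $t$ occurs infinitely often in the run $\rho$ of $\mathcal{R}$ on $w$ (via the run correspondence established in \cref{stm:iar_star_simulates_rabin_run}). Since by assumption $t \in F_{\fun{\pi}{i}}$, this witnesses $F_{\fun{\pi}{i}} \intersection \fun{\Inf}{w} \neq \emptyset$, so $F_{\fun{\pi}{i}} \in \ProhibitedInf$; this is the base case of the claim, namely $j = i$.

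Next I would invoke the ordering part of \cref{stm:iar_inf_visited_indices}: in \emph{every} infinitely often visited state, every prohibited set in $\ProhibitedInf$ is placed strictly before every prohibited set not in $\ProhibitedInf$. I would apply this specifically to the infinitely often visited state $\tuple{q, \pi}$, using the fact that the position of $F_{\fun{\pi}{m}}$ in $\pi$ is exactly $m = \fun{\pi^{-1}}{\fun{\pi}{m}}$ by definition of $\pi^{-1}$, so that the lemma's statement about \emph{positions} translates cleanly into a statement about the \emph{indices} $j \leq i$ appearing in the corollary.

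The remaining step is a one-line argument by contradiction. Suppose some $j \leq i$ had $F_{\fun{\pi}{j}} \notin \ProhibitedInf$. Applying the ordering property to the infinitely often visited set $F_{\fun{\pi}{i}}$ and the finitely often visited set $F_{\fun{\pi}{j}}$ in the state $\tuple{q, \pi}$ forces the position of $F_{\fun{\pi}{i}}$ to be strictly smaller than that of $F_{\fun{\pi}{j}}$, i.e.\ $i < j$, contradicting $j \leq i$. Hence every $F_{\fun{\pi}{j}}$ with $j \leq i$ lies in $\ProhibitedInf$, which is exactly the claim.

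I expect no serious obstacle here, since the substance is entirely carried by \cref{stm:iar_inf_visited_indices}; the corollary is essentially a repackaging of its ordering property. The only points requiring care are confirming that the source state of $\tilde{t}$ is genuinely infinitely often visited (so that the lemma applies to it), and keeping the bookkeeping straight between \emph{positions} in the permutation and \emph{indices} of the Rabin pairs, so that the position inequality delivered by the lemma becomes the index inequality $j \leq i$ in the desired direction.
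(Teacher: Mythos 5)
Your proof is correct and matches the paper's intent exactly: the paper states this corollary as an immediate consequence of \cref{stm:iar_inf_visited_indices} without spelling out the argument, and your write-up supplies precisely the intended reasoning (the source state of $\tilde{t}$ is infinitely often visited, $F_{\fun{\pi}{i}} \in \ProhibitedInf$ is witnessed by $t$, and the ordering property of the lemma rules out any $j \leq i$ with $F_{\fun{\pi}{j}} \notin \ProhibitedInf$). The only cosmetic point is that you cite \cref{stm:iar_star_simulates_rabin_run} for the run correspondence, which is the same reference the paper itself uses in the proof of \cref{stm:iar_inf_visited_indices}, so this is fine.
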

Looking back at the definition of the priority function, the central idea of correctness can be outlined as follows.
For every $I_i$ which is visited infinitely often we can distinguish two cases:
\begin{itemize}
	\item $F_i$ is visited finitely often.
	Then the position of the pair is greater than the one of every $F_j \in \ProhibitedInf$.
	Hence the priority of every transition $\tilde{t}$ with corresponding transition $t \in I_i$ is both even and bigger than every odd priority seen infinitely often along the run.
	\item $F_i$ is visited infinitely often, \abbrevie after each visit of $I_i$, $F_i$ is eventually visited.
	As argued in the proof of \cref{stm:iar_inf_visited_indices}, the position of $F_i$ can only increase until it is visited again.
	Hence every visit of $I_i$ which yields an even parity is followed by a visit of $F_i$ yielding an odd parity which is strictly greater.
\end{itemize}
Using this intuition, we formally show correctness of the construction in \cref{section:appendix:iar_correctness}.
\begin{theorem}\label{stm:iar_correctness}
	For any DRA $\mathcal{R}$ we have that $\Language(\fun{\IAR}{\mathcal{R}}) = \Language(\mathcal{R})$.
\end{theorem}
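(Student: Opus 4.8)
The plan is to reduce the claimed language equality to a statement about a single run and then analyse the maximal priority occurring infinitely often. By determinism, $\fun{\IAR}{\mathcal{R}}$ has a run on $w$ exactly when $\mathcal{R}$ does, and projecting the first component of the run $\tilde\rho$ of $\fun{\IAR}{\mathcal{R}}$ onto $Q$ yields the run $\rho$ of $\mathcal{R}$; this correspondence, together with the matching of $\fun{\Inf}{\tilde\rho}$ and $\fun{\Inf}{\rho}$ under the obvious projection of transitions, is exactly what \cref{stm:iar_star_simulates_rabin_run} provides. Hence it suffices to show, for each such $w$, that $\rho$ satisfies the Rabin condition iff the maximal priority $p^\ast := \max\set{\fun{\lambda}{\tilde t} \mid \tilde t \in \fun{\Inf}{\tilde\rho}}$ is even. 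I would first observe that $p^\ast$ is governed by $P := \max\set{\fun{\maxInd}{\tilde t} \mid \tilde t \in \fun{\Inf}{\tilde\rho}}$: since every priority assigned at a position $p$ lies in $\set{2p, 2p+1}$ and $2(p-1)+1 < 2p$, any infinitely often visited transition whose maximal touched position is below $P$ receives priority strictly below $2P$, so $p^\ast \in \set{2P, 2P+1}$ is decided entirely by the transitions attaining the value $P$. Write $m$ for the number of prohibited sets in $\ProhibitedInf$.

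For the direction from $\mathcal{R}$ to $\fun{\IAR}{\mathcal{R}}$, suppose $\rho$ satisfies the Rabin condition via a pair $\tuple{F_i, I_i}$ with $F_i \notin \ProhibitedInf$ and $I_i$ visited infinitely often. By \cref{stm:iar_inf_visited_indices} the finitely often visited prohibited sets occupy fixed positions, all larger than the positions $1,\dots,m$ held by the sets of $\ProhibitedInf$; in particular the position of this pair stabilises to some $p_i > m$. Choosing an infinitely often visited transition with corresponding $t \in I_i$ and evaluating it in a stabilised source state gives $\fun{\maxInd}{\tilde t} \geq p_i > m$, whence $P > m$. Since position $P > m$ is fixed and holds a pair whose prohibited set is visited only finitely often, every infinitely often visited transition whose maximal touched position equals $P$ must meet that pair through its required set but not its prohibited set, so it receives the even priority $2P$; no infinitely often visited transition can exceed it, and thus $p^\ast = 2P$ is even.

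The converse direction is the crux. Assume $p^\ast = 2P$ is even and let $\tilde t^\circ \in \fun{\Inf}{\tilde\rho}$ attain it, so its source permutation places some pair $c$ at position $P$ with $t^\circ \in I_c \setminus F_c$; in particular $I_c$ is visited infinitely often. It remains to prove $F_c \notin \ProhibitedInf$, which together yields Rabin acceptance via $\tuple{F_c, I_c}$. Suppose for contradiction that $F_c$ is visited infinitely often. The key point is that visiting the required set $I_c$ does not relocate pair $c$ (the construction only moves prohibited sets), so after each of the infinitely many occurrences of $\tilde t^\circ$ the position of $c$ can only grow until $F_c$ is next visited, at which point $c$ sits at a position $\geq P$; evaluating that visiting transition therefore yields priority $\geq 2P+1$ (an odd $2P+1$ when $c$ is the maximal touched pair, and at least $2(P+1)$ when a pair at a strictly higher position is touched). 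As the automaton has finitely many transitions, some transition of priority strictly greater than $2P$ recurs infinitely often, contradicting maximality of $p^\ast = 2P$. Hence $F_c \notin \ProhibitedInf$ and $\mathcal{R}$ accepts $w$.

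The hard part is exactly this last monotonicity argument: one must exploit that required-set visits leave the corresponding index in place while other prohibited-set visits only push it towards the back, so that an infinitely recurring $F_c$ necessarily produces strictly dominating odd priorities. The boundary case needs no separate treatment: if no acceptance set is seen infinitely often then $P = 0$, $p^\ast = 1$ is odd, and indeed no Rabin pair is satisfied, consistently with both implications. Combining the two directions gives $p^\ast$ even iff $\rho$ satisfies the Rabin condition, and therefore $\Language(\fun{\IAR}{\mathcal{R}}) = \Language(\mathcal{R})$.
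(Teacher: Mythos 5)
Your proof is correct and follows essentially the same route as the paper's: both reduce acceptance to the parity of the maximal infinitely recurring priority, use \cref{stm:iar_inf_visited_indices} to place an accepting pair beyond all infinitely often visited prohibited sets in the forward direction, and in the converse direction exploit that the position of $F_c$ cannot decrease between a visit of $I_c$ and the next visit of $F_c$, which forces an infinitely recurring priority strictly above $2P$. The only cosmetic difference is that you argue the forward direction via the threshold $m = \cardinality{\ProhibitedInf}$ instead of invoking \cref{stm:smaller_indices_visited_inf_often}.
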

\begin{proposition}[Complexity]\label{stm:complexity}
	For every DRA $\mathcal{R}$ with $n$ states and $k$ Rabin pairs, the constructed automaton $\fun{\IAR}{\mathcal{R}}$ has at most $n \cdot k!$ states and $2k+1$ priorities.
\end{proposition}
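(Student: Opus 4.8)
The plan is to read both bounds directly off the definition of $\fun{\IAR}{\mathcal{R}}$ in \cref{def:IAR}. Both are elementary counting arguments, so I expect no genuine difficulty; the statement is really a sanity check confirming that the construction does not blow up beyond the claimed factorial bound.

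First, for the state bound, I would simply invoke the definition of the state space, $\tilde{Q} = Q \times \Pi^k$. Since $\cardinality{Q} = n$ and $\Pi^k$ is by definition the set of all permutations of $\set{1, \dots, k}$, we have $\cardinality{\Pi^k} = k!$, so that $\cardinality{Q \times \Pi^k} = n \cdot k!$. As only reachable states contribute to the constructed automaton (and the optimization of reusing existing successors mentioned after \cref{def:IAR} can only decrease this count further), the automaton has at most $n \cdot k!$ states.

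Second, for the priority bound, I would inspect the range of the priority assignment $\lambda$. The quantity $\fun{\maxInd}{\tilde{t}}$ is defined as a maximum taken over positions $\fun{\pi^{-1}}{i}$ in a permutation of $\set{1, \dots, k}$ together with the value $0$; hence $\fun{\maxInd}{\tilde{t}} \in \set{0, 1, \dots, k}$. Running through the three cases of the definition of $\lambda$: when $\fun{\maxInd}{\tilde{t}} = 0$ the assigned priority is $1$; otherwise, writing $m = \fun{\maxInd}{\tilde{t}} \in \set{1, \dots, k}$, the priority is either $2m$ (even case) or $2m + 1$ (odd case). Taking the union over all admissible values, the set of attainable priorities is contained in $\set{1} \union \bigcup_{m=1}^{k} \set{2m, 2m + 1} = \set{1, 2, \dots, 2k + 1}$, a set of exactly $2k + 1$ elements. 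Therefore at most $2k + 1$ distinct priorities occur.

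There is no real obstacle here; the only points requiring a moment's care are confirming that $\cardinality{\Pi^k} = k!$ and that the largest attainable priority is $2k + 1$ (realized precisely when $m = k$ and the odd case applies), rather than $2k$ or $2k + 2$. Both follow immediately from the explicit case distinction in the priority assignment, so the proof amounts to assembling these observations.
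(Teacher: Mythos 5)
Your proof is correct and takes essentially the same route as the paper, which treats the proposition as immediate from \cref{def:IAR}: the state bound is just $\cardinality{Q \times \Pi^k} = n \cdot k!$ and the priority bound follows from $\fun{\maxInd}{\tilde{t}} \in \set{0,\dots,k}$ forcing $\lambda$ into $\set{1,\dots,2k+1}$. (Note that the appendix section titled \enquote{Proof of Complexity} proves the subsequent optimality lower bound via \cite{DBLP:conf/fsttcs/Loding99}, not this upper bound, which the paper leaves exactly as the counting argument you give.)
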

Moreover, using the \cite{DBLP:conf/fsttcs/Loding99}, one can show that 
this is essentially optimal.
There exists a family $\sequence{\Language_n}_{n \geq 2}$ of languages such that for every $n$ the language $L_n$ can be recognized by a DRA with $O(n)$ states and $O(n)$ pairs, but cannot be recognized by a DPA with less than $n!$ states.
For details, see Appendix~\ref{app:complexity}.

\begin{remark}[Comparison to previous IAR]\label{rem:streett}
	Our construction is similar to the index appearance record of \cite{loding-thesis} in that it keeps the information about the current state and a permutation of pairs, implementing the appearance record.
	However, from the point of view of Streett automata, it is very natural to keep two pointers into the permutation, indicating the currently extreme positions of both types of sets in the accpetance condition.
	Indeed, this way we can keep track of all conjuncts of the form $\Inf(I_j)\implies\Inf(F_j)$.
	This is also the approach that \cite{loding-thesis} takes.
	In contrast, we have no pointers at all.
	From the Rabin point of view, it is more natural to keep track of the prohibited sets only and the respective pointer is hidden in the information about the current state \emph{together} with the current permutation.
	Additionally, the pointer for the required set is hidden into the acceptance status of transitions. In the transition-based setting, it is not necessary to remember the visit of a required set in the state-space; it is sufficient to emit the respective priority upon seeing this \emph{during} the transition when we know both the source and target states. 
	The absence of these pointers results in better performance.
\end{remark}

\begin{remark}[Using IAR for DGRA]
	The straightforward way to translate a DGRA to DPA is to first de-generalize the DGRA and then apply the presented $\IAR$ construction.
	However, one can also apply the IAR idea to directly translate from DGRA to DPA: Instead of only tracking the pair indices, one could incorporate all $F_i$ and $I_i^j$ into the appearance permutation.
	With the same reasoning as above, a parity condition can be used to decide acceptance.
	
	This approach yields a correct algorithm, but compared to de-generalization combined with IAR, the state space grows much larger.
	Indeed, given a DGRA with $n$ states and $k$ accepting pairs with $l_i$ required sets each, the de-generalized DRA has at most $n \cdot \prod_{i=1}^k l_i$ states and $k$ pairs, hence the resulting parity automaton has at most $k! \cdot n \cdot \prod_{i=1}^k l_{i}$ states and $2k+1$ priorities.
	Applying the mentioned specific construction gives $n \cdot (\sum_{i=1}^k (l_i +1))!$ states and $2 \cdot (\sum_{i=1}^k (l_i + 1)) + 1$ priorities.
	A simple induction on $k$ suffices to show that the worst case upper bound for the specific construction is always larger.
	We conjecture that this behaviour also shows in real-world applications.
\end{remark}
\section{Optimizations} \label{section:optimizations}

In general, many states generated by the IAR procedure are often superfluous and could be omitted. 
In the following, we present 
several optimizations of our construction, which aim to do so. Moreover, these optimizations can be applied also to the IAR construction of \cite{loding-thesis} and in a slighly adjusted way also to the standard SAR \cite{DBLP:conf/dagstuhl/Schwoon01}.
Further, although the optimizations are transition-based, they can be of course easily adapted to the state-based setting.
Due to space constraints, the correctness proofs can be found in \cref{section:appendix:iar_star_correctness}.

\subsection{Choosing an initial permutation} \label{section:optimizations:choosing_inital_permutation}

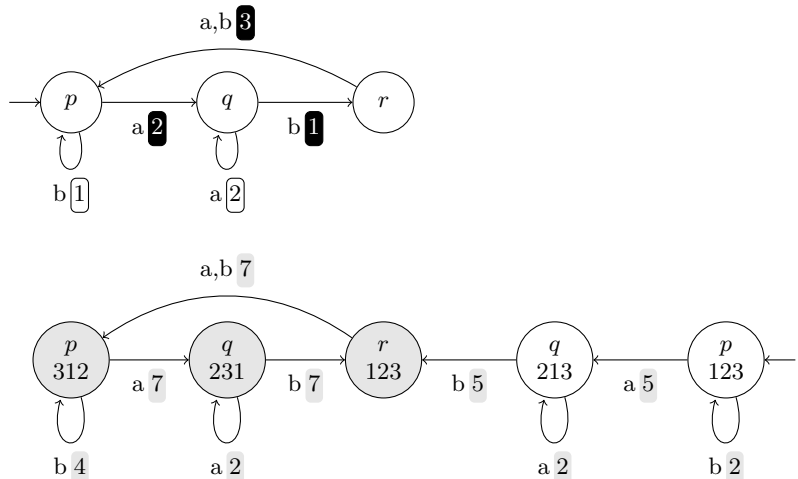
\begin{figure}[t]
	\centering
	\begin{tikzpicture}[auto,initial text=,node distance=1.25cm]
	\tikzstyle{every state}=[align=center]
	\node[state,initial left] (p) {$p$};
	\node[state,right=of p]   (q) {$q$};
	\node[state,right=of q]   (r) {$r$};
	
	\path[->]
	(p) edge [swap]            node {a\,\edgeprohibited{2}} (q)
	    edge [loop below]      node {b\,\edgerequired{1}} (p)
	(q) edge [swap]            node {b\,\edgeprohibited{1}} (r)
	    edge [loop below]      node {a\,\edgerequired{2}} (q)
	(r) edge [bend right,swap] node {a,b\,\edgeprohibited{3}} (p)
	;
	
	\node[state,below=2.5cm of p,fill=black!10] (p312) {$p$\\$312$};
	\node[state,below=2.5cm of q,fill=black!10] (q231) {$q$\\$231$};
	\node[state,below=2.5cm of r,fill=black!10] (r123) {$r$\\$123$};
	\node[state,right=of r123]                  (q213) {$q$\\$213$};
	\node[state,right=of q213,initial right]    (p123) {$p$\\$123$};
	
	\path[->]
	(p312) edge [swap]               node {a\,\edgepriority{7}} (q231)
	       edge [loop below]         node {b\,\edgepriority{4}} (p123)
	(q231) edge [loop below]         node {a\,\edgepriority{2}} (q231)
	       edge [swap]               node {b\,\edgepriority{7}} (r123)
	(r123) edge [bend right=35,swap] node {a,b\,\edgepriority{7}} (p312)
	(q213) edge [loop below]         node {a\,\edgepriority{2}} (q213)
	       edge []                   node {b\,\edgepriority{5}} (r123)
	(p123) edge []                   node {a\,\edgepriority{5}} (q213)
	       edge [loop below]         node {b\,\edgepriority{2}} (p123)
	;
	\end{tikzpicture}
	\caption{Example of a suboptimal initial permutation, using the same notation as in \cref{fig:iar_example}.
	Only the shaded states are constructed when choosing a better initial permutation.}
	\label{fig:example_bad_initial_permutation}
\end{figure}

The first observation is that the arbitrary choice of $\tuple{1, \dots, k}$ as initial permutation can lead to suboptimal results.
It may happen that several states of the resulting automaton are visited at most once by every run before some \enquote{recurrent} permutation is reached.
These states enlarge the state-space unnecessarily, as demonstrated in \cref{fig:example_bad_initial_permutation}.
Indeed, when choosing $\tuple*{p, \tuple*{3, 1, 2}}$ instead of $\tuple*{p, \tuple*{1, 2, 3}}$ as the initial state in the example, only the shaded states are built during the construction, while the language of the resulting automaton is still equal to that of the input DRA.

We overload the $\IAR$ algorithm to be parametrized by the starting permutation, \abbrevie we write $\fun{\IAR}{\mathcal{R}, \pi_0}$ to denote the IAR construction applied to the DRA $\mathcal{R}$ starting with permutation $\pi_0$.
\begin{theorem} \label{stm:iar_independence_of_initial_permutation}
	For an arbitrary Rabin automaton $\mathcal{R}$ with $k$ pairs we have that $\Language(\fun{\IAR}{\mathcal{R}}) = \Language(\fun{\IAR}{\mathcal{R}, \pi_0})$ for all $\pi_0 \in \Pi^k$.
\end{theorem}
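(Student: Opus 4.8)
The plan is to reduce the statement to the correctness theorem \cref{stm:iar_correctness} by observing that its proof never relies on the concrete initial permutation $\tuple*{1, \dots, k}$. Two things need checking: that the \emph{domain} of the two automata (the set of words admitting a run) coincides, and that on this common domain the \emph{acceptance verdict} is $\pi_0$-independent. For the domain, I would note that the underlying DTS of $\fun{\IAR}{\mathcal{R}, \pi_0}$ has state space $Q \times \Pi^k$ with the first component evolving exactly as in $\mathcal{R}$ and the permutation component updated by a total deterministic rule; hence $\fun{\IAR}{\mathcal{R}, \pi_0}$ has a run on $w$ iff $\mathcal{R}$ does, independently of $\pi_0$ (this is the content invoked via \cref{stm:iar_star_simulates_rabin_run}).

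For the verdict, the key point is that acceptance of an $\omega$-word depends only on $\fun{\Inf}{\tilde\rho}$, \abbrevie on the tail of the run. I would then verify that every ingredient used to determine this tail is insensitive to $\pi_0$. First, \cref{stm:iar_inf_visited_indices} and its \cref{stm:smaller_indices_visited_inf_often} are proved purely from the update rule for positions --- a set moves to the front when visited and is otherwise only pushed back --- and make no reference to the starting permutation; thus for any $\pi_0$ the infinitely often visited prohibited sets eventually precede all finitely often visited ones in every infinitely often visited state. Second, the priority case analysis sketched before \cref{stm:iar_correctness} only inspects transitions in $\fun{\Inf}{\tilde\rho}$ and their \emph{relative} positions, and concludes that $w$ is accepted iff there is a Rabin pair $i$ with $F_i \notin \ProhibitedInf$ and $I_i \in \RequiredInf$. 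This latter condition is a property of $w$ and $\mathcal{R}$ alone. Consequently the correctness argument yields $\Language(\fun{\IAR}{\mathcal{R}, \pi_0}) = \Language(\mathcal{R})$ for every $\pi_0 \in \Pi^k$, and in particular $\Language(\fun{\IAR}{\mathcal{R}}) = \Language(\mathcal{R}) = \Language(\fun{\IAR}{\mathcal{R}, \pi_0})$.

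The main obstacle I anticipate is making precise that only the \emph{limit ordering} of indices matters, not the exact recurrent permutations. Different starting permutations will in general produce different stabilized orders among the finitely often visited prohibited sets (their relative order freezes at their last visit, which depends on the prefix and hence on $\pi_0$), and even the recurrent permutations among the infinitely often visited sets may differ. I would therefore argue carefully that the maximum priority occurring infinitely often is governed only by which pairs have $F_i \in \ProhibitedInf$ versus $I_i \in \RequiredInf$ together with the separation between the two groups guaranteed by \cref{stm:iar_inf_visited_indices} --- both $\pi_0$-invariant --- so that its parity, and hence acceptance, is unchanged. An alternative, fully self-contained route would couple the two runs on a fixed $w$ and show their emitted priorities eventually agree in the dominant position; but since the finitely-visited orderings need not coincide, this coupling only synchronizes the top block, which is exactly the block that determines the verdict, so the reduction-to-correctness route above is the more economical one.
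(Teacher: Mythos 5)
Your proposal is correct and follows essentially the same route as the paper: the paper proves a strengthened correctness lemma (its Lemma~\ref{stm:iar_correctness_extended}) stating $\Language(\mathcal{P}_{\tuple{q',\pi'}}) = \Language(\mathcal{R}_{q'})$ for \emph{every} state $q'$ and \emph{every} permutation $\pi'$, from which both \cref{stm:iar_correctness} and \cref{stm:iar_independence_of_initial_permutation} follow immediately --- exactly your observation that the correctness argument (run existence, the stabilization in \cref{stm:iar_inf_visited_indices}, and the priority analysis) never uses the concrete initial permutation. Your remark that only the separation between infinitely and finitely visited prohibited sets matters, not the particular stabilized orderings, is precisely the invariance the paper's lemma encapsulates.
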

How to choose a \enquote{good} initial permutation is deferred to \cref{section:optimizations:pick_perm}, as it is intertwined with the algorithm presented in the following section.

\subsection{SCC decomposition}

Acceptance of a word by an $\omega$-automaton only depends on the set of states visited infinitely often by its run.
This set of states is \emph{strongly connected} on the underlying graph structure, \abbrevie starting from any state in the set, any other state can be reached with finitely many steps.
In general, any strongly connected set belongs to exactly one \emph{strongly connected component} (SCC).
Therefore, for a fixed SCC, only the Rabin pairs with required sets intersecting this SCC are relevant.

Using this we can restrict ourselves to the Rabin pairs that can possibly accept in that SCC while processing it.
This reduces the number of indices we need to track in the appearance record for each SCC, which can lead to significant savings.

For readability, we introduce some abbreviations.
Given a DRA $\mathcal{R} = \linebreak[4] \tuple*{Q, \Sigma, \delta, q_0, \set*{\tuple{F_i, I_i}}_{j=1}^k}$ and a set of states $S \subseteq Q$ we write $\delta \upharpoonright S : S \times \Sigma \to S$ to denote the restriction of $\delta$ to $S$, \abbrevie $\fun{\delta \upharpoonright S}{q, a} = \fun{\delta}{q, a}$ if $\fun{\delta}{q, a} \in S$ and undefined otherwise.
Analogously, we define $\Delta \upharpoonright S = \Delta \intersection S \times \Sigma \times S$ as the set of transitions in the restricted automaton.
Consequently, we define the restriction of the whole automaton $\mathcal{R}$ to the set of states $S$ using $q \in S$ as initial state by
\begin{equation*}
	\mathcal{R} \upharpoonright_q S = \tuple*{S, \Sigma, \delta \upharpoonright S, q, \set*{\tuple*{F_i \intersection (\Delta \upharpoonright S), I_i \intersection (\Delta \upharpoonright S)} \mid I_i \intersection (\Delta \upharpoonright S) \neq \emptyset}}.
\end{equation*}
Furthermore, we call a SCC of an automaton \emph{transient}, if it is a singleton set without a self-loop.
This means that it is visited at most once by any run and it is not of interest for acceptance.
Finally, we use $\varepsilon$ to denote the \enquote{empty} permutation (of length $0$).

Using this notation, we describe the optimized IAR construction, denoted IAR$^*$ in \cref{alg:iar_star}.
The algorithm decomposes the DRA into its SCCs, applies the formerly introduced IAR procedure to each sub-automaton separately and finally connects the resulting DPAs back together.
\begin{algorithm}[p]
	\DontPrintSemicolon
	\SetKwInOut{Input}{Input}\SetKwInOut{Output}{Output}
	
	\Input{A DRA $\mathcal{R} = \tuple*{Q, \Sigma, \delta, q_0, \set*{\tuple*{F_i, I_i}}_{j=1}^k}$}
	\BlankLine
	\Output{A DPA recognizing the same language as $\mathcal{R}$}
	\BlankLine
	
	$Q^* \gets \set*{}$, $\delta^* \gets \set*{}$, $\lambda^* \equiv 1$\;
	\ForEach{SCC $S$ in $\mathcal{R}$}{
		\uIf(\tcp*[h]{SCC not relevant}){$S$ transient or $\set*{i \mid I_i \intersection \Delta \upharpoonright S \neq \emptyset} = \emptyset$}{ \label{alg:iar_star:line:scc_distinction_if}
			Add $S \times \set*{\varepsilon}$ to $Q^*$\; \label{alg:iar_star:line:update_q_star_in_trivial_scc}
			\ForEach{$q \in S$, $a \in \Sigma$ such that $\fun{(\delta \upharpoonright S)}{q, a}$ is defined}{
				Let $q' = \fun{\delta}{q, a}$\;
				Set $\fun*{\delta^*}{\tuple*{q, \varepsilon}, a} = \tuple*{q', \varepsilon}$ and $\fun*{\lambda^*}{\trans*{\tuple*{q, \varepsilon}}{a}{\tuple*{q', \varepsilon}}} = 1$\label{alg:iar_star:line:update_delta_star_in_trivial_scc}
			}
		}\Else(\tcp*[h]{SCC relevant, apply IAR to the sub-automaton}){
			Pick a starting state $q \in S$\;
			$\tuple{Q_S, \Sigma, \delta_S, \tuple*{q, \pi}, \lambda_S} \gets \fun*{\IAR}{\mathcal{R} \upharpoonright_q S, \fun{\pickPerm}{q, S}}$\;\label{alg:iar_star:line:do_scc_iar}
			Update $Q^*$, $\delta^*$ and $\lambda^*$ with $Q_S$, $\delta_S$ and $\lambda_S$, respectively\label{alg:iar_star:line:update_variables_after_iar}
		}
	}
	\tcp{Connect all SCCs}
	\ForEach{$\tuple*{q, \pi} \in Q^*$ and $a \in \Sigma$ \abbrevst $q' = \fun{\delta}{q, a}$ in different SCC of $\mathcal{R}$ than $q$}{ \label{alg:iar_star:line:for_loop_inter_scc_connection}
		Pick a $\pi'$ with $\tuple*{q', \pi'} \in Q^*$\;\label{alg:iar_star:line:pick_pi_prime_for_inter_scc_connection}
		Set $\fun{\delta^*}{\tuple*{q, \pi}, a} = \tuple*{q', \pi'}$ \label{alg:iar_star:line:update_delta_star_with_inter_scc_connection}
	}
	\caption{The optimized $\IAR$ construction $\IAR^*$}
	\label{alg:iar_star}
\end{algorithm}

\begin{figure}[p]
	\centering
	\begin{tikzpicture}[auto,node distance=1cm,initial text=]
		\tikzstyle{every state}=[align=center,minimum size=0.9cm]
		\node[state,initial above]         (p)              {$p$};
		\node[state]                       (q) [below=of p] {$q$};
		\node[state]                       (r) [below=of q] {$r$};
		\node[text width=4cm,align=center] (dra_text) [below=0.5cm of r] {Example DRA,\\SCCs shaded};
		
		\node[rectangle,rounded corners=10pt,draw=none,fill=black,fill opacity=0.1,fit=(p)] {};
		\node[rectangle,rounded corners=10pt,draw=none,fill=black,fill opacity=0.1,fit=(q) (r)] {};
		
		\path[->]
		(p) edge  [loop left] node {\,a\,\edgerequired{1}\edgeprohibited{2}} (p)
		    edge  []          node {b} (q)
		(q) edge  [bend left] node {\,a\,\edgeprohibited{2}} (r)
		    edge  [loop left] node {\,b\,\edgerequired{2}} (q)
		(r) edge  [loop left] node {\,a\,\edgeprohibited{1}} (r)
		    edge  [bend left] node {\,b\,\edgeprohibited{1}\edgerequired{2}} (q)
		;
		
		\node[state,initial above]         (p12) [right=2cm of p] {$p$\\$12$};
		\node[state]                       (p21) [right=of p12]   {$p$\\$21$};
		\node[state]                       (q12) at (p12 |- q) []  {$q$\\$12$};
		\node[state]                       (q21) [right=of q12]   {$q$\\$21$};
		\node[state]                       (r12) at (p12 |- r) []  {$r$\\$12$};
		\node[state]                       (r21) [right=of r12]   {$r$\\$21$};
		\node                              (dpa_text_anchor) at ($(r12)!0.5!(r21)+(0,0)$) {};
		\node[text width=4cm,align=center] (dpa_text) at (dpa_text_anchor |- dra_text) {Result of unoptimized $\IAR$ applied to the DRA};
		
		\path[->]
		(p12) edge []             node {a\,\edgepriority{5}} (p21)
		      edge []             node {b\,\edgepriority{1}} (q12)
		(p21) edge [loop right]   node {a\,\edgepriority{4}} (p21)
		      edge []             node {b\,\edgepriority{1}} (q21)
		(q12) edge [bend left=15] node[sloped,anchor=south,auto=false]  {a\,\edgepriority{5}} (r21)
		      edge [loop left]    node {b\,\edgepriority{4}} (q12)
		(q21) edge []             node {a\,\edgepriority{3}} (r21)
		      edge [loop right]   node {b\,\edgepriority{2}} (q21)
		(r12) edge [loop left]    node {a\,\edgepriority{3}} (r12)
		      edge []             node {b\,\edgepriority{4}} (q12)
		(r21) edge []             node {a\,\edgepriority{5}} (r12)
		      edge [bend left=15] node[sloped,anchor=north,auto=false]  {b\,\edgepriority{5}} (q12)
		;
		
		\node[state,initial above]         (o_p) [right=2cm of p21] {$p$\\$1$};
		\node[state]                       (o_q) at (o_p |- q) []   {$q$\\$2$};
		\node[state]                       (o_r) at (o_p |- r) []   {$r$\\$2$};
		\node[text width=4cm,align=center] (dpa_opt_text) at (o_r |- dpa_text) {Output of the optimized IAR$^*$.};
		
		\path[->]
		(o_p) edge [loop right] node {a\,\edgepriority{2}} (o_p)
		      edge []           node {b\,\edgepriority{1}} (o_q)
		(o_q) edge [bend left]  node {a\,\edgepriority{3}} (o_r)
		      edge [loop right] node {b\,\edgepriority{2}} (o_q)
		(o_r) edge [loop right] node {a\,\edgepriority{3}} (o_r)
		      edge [bend left]  node {b\,\edgepriority{3}} (o_q)
		;
	\end{tikzpicture}
	
	\caption{Example application of \cref{alg:iar_star}}
	\label{fig:example_iar_star}
\end{figure}

As we apply the IAR construction to each SCC separately, we have to choose the initial permutation for each state of those SCCs.
\cref{stm:iar_independence_of_initial_permutation} shows that for a particular initial state, correctness of IAR does not depend on the chosen permutation.
We therefore delegate the choice to a function $\pickPerm$ and prove correctness of the optimized algorithm independent of this function, allowing for further optimizations.
We present an optimal definition of $\pickPerm$ in the next subsection.

\cref{fig:example_iar_star} shows an example application and the obtained savings of the construction.
Pair $1$ is only relevant for acceptance in the SCC $\set*{p}$, but in the unoptimized construction it still changes the permutations in the part of the automaton constructed from $\set*{q, r}$, as \abbreveg the transition $\trans{r}{b}{q}$ is contained in $F_1$.
Similarly, pair $2$ is tracked in $\set*{p}$ while actually not being relevant.
The optimized version yields improvements in both state-space size and amount of priorities.
\begin{theorem}\label{stm:iar_star_correctness}
	For any DRA $\mathcal{R}$ we have that $\Language(\fun{\IAR^*}{\mathcal{R}}) = \Language(\mathcal{R})$, independent of $\pickPerm$.
\end{theorem}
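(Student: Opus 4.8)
The plan is to reduce correctness of $\IAR^*$ to correctness of the plain $\IAR$ (\cref{stm:iar_correctness}) applied to a single sub-automaton, exploiting that acceptance of an $\omega$-word depends only on the SCC in which its run eventually settles. First I would record a structural observation that is immediate from \cref{alg:iar_star}: every transition of $\delta^*$ projects onto a transition of $\delta$ in its first component (the intra-SCC transitions come from $\IAR$ or from the trivial branch, both of which copy $\delta \upharpoonright S$, and the connecting transitions of \cref{alg:iar_star:line:update_delta_star_with_inter_scc_connection} use $q' = \fun{\delta}{q,a}$). Hence $\fun{\IAR^*}{\mathcal{R}}$ has a run on $w$ iff $\mathcal{R}$ does, and the two runs agree on their first components. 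This settles the case where $\mathcal{R}$ has no run on $w$ (both reject), so from now on I fix a $w$ with run $\rho$ of $\mathcal{R}$ and the corresponding run $\tilde{\rho}$ of $\fun{\IAR^*}{\mathcal{R}}$.

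Since $\fun{\Inf}{\rho}$ is strongly connected it lies in a unique SCC $S$, and $\rho$ visits only transitions of $S$ from some step $m$ onwards; as $\fun{\Inf}{\rho}$ contains at least one transition, $S$ is not transient. Every transition taken before step $m$, together with every inter-SCC connecting transition (each used at most once, since the SCC quotient is acyclic), is seen only finitely often, so the parity acceptance of $\tilde{\rho}$ is decided entirely by the priorities emitted on the tail that stays inside the part of $\fun{\IAR^*}{\mathcal{R}}$ built from $S$. I would then split on the type of $S$. If $S$ carries no relevant pair, i.e.\ $I_i \intersection (\Delta \upharpoonright S) = \emptyset$ for all $i$, then no Rabin pair can accept $\rho$ (an accepting pair needs $I_i \intersection \fun{\Inf}{\rho} \neq \emptyset$, while $\fun{\Inf}{\rho} \subseteq \Delta \upharpoonright S$), so $\mathcal{R}$ rejects $w$; simultaneously the trivial branch assigns priority $1$ throughout this part, so the maximal priority seen infinitely often is $1$ and $\fun{\IAR^*}{\mathcal{R}}$ rejects as well.

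It remains to treat a relevant $S$. Let $\tuple{q'', \pi''}$ be the state at which $\tilde{\rho}$ enters the $\IAR$-part of $S$ for the last time, and let $w'$ be the corresponding suffix of $w$. By construction the transitions of this part are exactly those of the parity automaton $\fun{\IAR}{\mathcal{R} \upharpoonright_{q''} S, \pi''}$, so the tail of $\tilde{\rho}$ is precisely the run of this automaton on $w'$ and emits the same infinitely-often priorities. Therefore $\tilde{\rho}$ is parity-accepting iff $\fun{\IAR}{\mathcal{R}\upharpoonright_{q''}S, \pi''}$ accepts $w'$, which by \cref{stm:iar_correctness} together with \cref{stm:iar_independence_of_initial_permutation} (to discard the dependence on $\pi''$) happens iff $\mathcal{R}\upharpoonright_{q''}S$ accepts $w'$. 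Finally I would check that the restriction is faithful: the run of $\mathcal{R}\upharpoonright_{q''}S$ on $w'$ visits infinitely often exactly the transitions of $\fun{\Inf}{\rho}$, and $\fun{\Inf}{\rho} \subseteq \Delta \upharpoonright S$, so intersecting each $F_i$ and $I_i$ with $\Delta \upharpoonright S$ changes none of the $\Inf$/$\Fin$ truth values, while every pair that can accept $\rho$ is retained (it satisfies $I_i \intersection (\Delta\upharpoonright S) \neq \emptyset$). Hence $\mathcal{R}\upharpoonright_{q''}S$ accepts $w'$ iff $\mathcal{R}$ accepts $w$, closing the chain of equivalences.

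The step I expect to be the main obstacle is the ``entering mid-run'' argument: justifying cleanly that the tail of $\tilde{\rho}$ coincides with an honest run of $\fun{\IAR}{\mathcal{R}\upharpoonright_{q''}S, \pi''}$ from its own initial state, so that \cref{stm:iar_correctness} --- stated for full runs from the designated initial state --- may be invoked. The key is that parity acceptance depends only on the set of infinitely-often emitted priorities, which is a tail property, and that the reachable part actually built by the algorithm contains this tail; \cref{stm:iar_independence_of_initial_permutation} is exactly what frees us from caring which permutation $\pi''$ we happen to arrive in. The independence of $\pickPerm$ then follows automatically, since every statement above is quantified over the arrival state $\tuple{q'', \pi''}$ and relies on permutation-independence rather than on the specific value returned by $\pickPerm$.
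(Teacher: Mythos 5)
Your proof is correct and follows essentially the same route as the paper's: establish the run correspondence between $\mathcal{R}$ and $\fun{\IAR^*}{\mathcal{R}}$, dispose of the irrelevant/transient-SCC case where only odd priorities recur, show that restricting to the SCC $S$ containing $\fun{\Inf}{\rho}$ is faithful for acceptance, and reduce to correctness of plain $\IAR$ on the restricted sub-automaton. The only (immaterial) packaging difference is in handling the mid-run entry into $S$: the paper proves the stronger \cref{stm:iar_correctness_extended}, giving language equality from \emph{every} state of the IAR automaton, and applies it directly to the suffix starting at $\rho^*_j$, whereas you re-root the construction at the entry state $(q'',\pi'')$ and combine \cref{stm:iar_correctness} with \cref{stm:iar_independence_of_initial_permutation}, which amounts to the same thing.
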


\subsection{Optimal choice of the initial permutation} \label{section:optimizations:pick_perm}

In  \cref{fig:example_bad_initial_permutation} we provided a scalable example where the choice of the initial permutation can significantly reduce the size of the generated automaton.
In this subsection, we explain a procedure yielding a permutation which minimizes the state space of the automaton generated by $\IAR^*$.

First, we recall that $\pickPerm$ is only invoked when processing a particular (non-transient) SCC of the input automaton.
Consequently, we can restrict ourselves to only deal with Rabin automata forming a single SCC.
Let now $\mathcal{R}$ be such an automaton.
While $\fun{\IAR}{\mathcal{R}, \pi_0}$ may contain multiple SCCs, we show that it contains exactly one bottom SCC (BSCC), \abbrevie a SCC without outgoing edges.
Additionally, this BSCC is the only SCC which contains all states of the original automaton $\mathcal{R}$ in the first component of its states.
\begin{theorem} \label{stm:iar_star_subautomaton_bscc_unique}
	Let $\mathcal{R} = \tuple*{Q, \Sigma, \delta, q_0, \set*{\tuple*{F_i, I_i}}_{i=1}^k}$ be a Rabin automaton that is strongly connected.
	For a fixed $\pi_0 \in \Pi^k$, $\fun{\IAR}{\mathcal{R}, \pi_0}$ contains exactly one BSCC $S$ and for every SCC $S'$ we have that $S = S'$ iff $Q = \set*{q \mid \Exists \pi \in \Pi^k. \tuple*{q, \pi} \in S'}$.
	Furthermore the BSCCs for different $\pi_0$ are isomorphic.
\end{theorem}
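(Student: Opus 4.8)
The plan is to isolate a single \emph{canonical} permutation that every reachable state can reach, use it to pin down a unique BSCC, and then read off the projection characterisation from absorption. Throughout I write $A = \fun{\IAR}{\mathcal{R}, \pi_0}$ for the reachable sub-automaton and use that $A$ is deterministic with finite state space, so at least one BSCC exists and the first component of every run of $A$ traces a genuine run of $\mathcal{R}$. The starting observation, which I will lean on repeatedly, is that the first-component transitions of $A$ do not depend on the permutation fibre, so the move-to-front update is a ``cocycle'' over the fixed strongly connected base $\mathcal{R}$.

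First I would establish the central dynamical fact about move-to-front. Since $\mathcal{R}$ is strongly connected, fix a closed walk $W$ at the chosen start state $q$ that traverses every transition of $\mathcal{R}$. I claim the permutation produced by feeding $W$ into the record is independent of the incoming permutation, \emph{except} for the relative order of two indices $m,m'$ whose prohibited sets have the same restriction to $\Delta$ (call these \emph{linked}); and that this relative order is a global invariant of the whole construction, hence equal to the order dictated by $\pi_0$ in every reachable state. The argument is that the final relative order of two \emph{unlinked} indices is fixed by the last transition of $W$ that moves exactly one of them, which exists precisely because they are unlinked and $W$ sees every transition; linked indices are always moved together, so their order is never altered anywhere. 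Consequently there is a single permutation $\pi^\ast$ such that from \emph{any} reachable state $(q',\pi')$, routing to $q$ (possible by strong connectivity) and then reading $W$ lands in $(q,\pi^\ast)$.

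With $\pi^\ast$ in hand, uniqueness of the BSCC is immediate: $(q,\pi^\ast)$ is reachable from every reachable state, so it lies in some BSCC $S$; any other BSCC $S'$ would contain a state that reaches $(q,\pi^\ast)\in S$ and thus leaves $S'$, contradicting absorption. The same reachability places $(q,\pi^\ast)$ in $S$, and that $S$ projects onto all of $Q$ then follows from absorption together with strong connectivity of $\mathcal{R}$: any $\mathcal{R}$-path from $q$ to an arbitrary state lifts to a path of $A$ that stays inside the absorbing set $S$, realising every first component.

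The converse — that no \emph{other} SCC projects onto all of $Q$ — is where I expect the real work, and I would phrase it around the fibre-independence of the base transitions. Whenever an index $m$ is moved to the front by a transition occurring inside a fibre-covering strongly connected set, the two relative orders of $m$ against any other refreshed index are forced to communicate, pulling the corresponding fibres into one SCC; iterating, an SCC that meets every state of $Q$ and refreshes every index must already contain a canonical state $(q,\pi^\ast)$ and therefore equal $S$. Concretely I would take an SCC $S'$ with projection $Q$, use strong connectivity of $S'$ to build an internal return word at some $(q,\pi)\in S'$, and argue it can be chosen to refresh every index in both orders so that the returned permutation is forced to be $\pi^\ast$. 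The delicate case, and the main obstacle, is an index whose prohibited set is never visited inside $S'$ (in particular an index with $F_m\cap\Delta=\emptyset$): such an index only ever drifts backwards and can let a \emph{transient} SCC spuriously cover $Q$, so this degeneracy has to be excluded or normalised away — exactly the role of the relevant-pair restriction in the SCC decomposition, which guarantees every retained index is genuinely refreshed inside the component. For the isomorphism across different $\pi_0$, note that when no two indices are linked and every index is refreshed, $\pi^\ast$ does not depend on $\pi_0$ at all and the BSCCs literally coincide; in general the only freedom left by $\pi_0$ is the frozen relative order of linked indices, and relabelling these by the permutation $\phi$ carrying $\pi_0$'s order to $\pi_0'$'s order gives the candidate isomorphism $(q,\pi)\mapsto(q,\phi\circ\pi)$. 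Since linked indices are indistinguishable to the move-to-front update, $\phi$ is a symmetry of the transition structure; the one verification that must be done with care is that $\phi$ is also a symmetry of the full acceptance condition so that it respects the priority assignment, which is the part I would spell out rather than wave at (language invariance itself being already guaranteed by \cref{stm:iar_independence_of_initial_permutation}).
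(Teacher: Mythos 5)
Your synchronisation mechanism --- a closed walk traversing every transition of $\mathcal{R}$ resets the permutation to a canonical $\pi^\ast$, up to the frozen relative order of indices that are never separated --- is exactly the engine of the paper's proof, and it cleanly delivers existence and uniqueness of the BSCC, the fact that the BSCC projects onto all of $Q$, and the isomorphism across different $\pi_0$. (The paper's version of your linked-index invariant is the narrower observation that indices with \emph{empty} prohibited set are never moved, so their relative order stays as in $\pi_0$ in every reachable state; that is all the isomorphism claim needs, and since only a graph isomorphism is asserted, your extra worry about the priority assignment is not required by the statement.)

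The genuine gap is in the converse direction --- that an SCC $S'$ with full projection must equal the BSCC --- and your proposed repair does not work. You defer the case of an index whose prohibited set is never refreshed inside $S'$ to ``the relevant-pair restriction in the SCC decomposition''. That restriction is not available here: the theorem is stated for $\fun{\IAR}{\mathcal{R},\pi_0}$ on an arbitrary strongly connected Rabin automaton, with no preprocessing. And even inside $\IAR^*$ the restriction keeps pair $i$ precisely when $I_i$ meets $\Delta\upharpoonright S$, i.e.\ it filters on the \emph{required} sets; it gives no guarantee that any retained \emph{prohibited} set is visited in the component --- in particular $F_i=\emptyset$ survives the filter --- so the degeneracy you want to normalise away is still present afterwards and must be dealt with in this proof. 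The paper closes the direction as follows: pick $(q,\pi')\in S'$ and a word $w$ whose run from $(q,\pi')$ is a \emph{closed walk inside $S'$} covering all of $S'$ (hence, by the full-projection hypothesis, visiting every state of $Q$ and thereby every non-empty prohibited set); reading the same $w$ from some $(q,\pi)$ in the BSCC then lands on the very same state, since the final permutation is determined by the order of last refreshes along $w$ together with the globally frozen order of the never-moved (empty) prohibited sets --- so the $S'$-run ends back at $(q,\pi')$ while the $S$-run ends in $S$, forcing $S\cap S'\neq\emptyset$. You should adopt this ``return word internal to $S'$'' device (so that the $S'$-side endpoint is pinned to the exact starting state rather than to some state you must still identify), and then say explicitly why every index not refreshed along that walk has the same relative position on both sides; for empty prohibited sets your global invariant does this, whereas for a non-empty $F_i$ whose transitions are not traversed inside $S'$ an argument is still owed --- this last point is the one step that genuinely requires care, and it is also where the paper leans on identifying ``visits every state'' with ``visits every non-empty $F_i$''.
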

The proof can be found in \cref{section:appendix:bscc_unique}.
This result makes defining an optimal choice of $\pickPerm$ straightforward.
By the theorem, there always is a BSCC of the same size, independent of $\pickPerm$.
If $\tuple*{q_0, \pi}$ is in the BSCC of some $\fun{\IAR}{\mathcal{R}, \pi_0}$, $\fun{\IAR}{\mathcal{R}, \pi}$ will generate the same BSCC and no other states.
Hence, we define $\fun{\pickPerm}{q, S}$ to return any permutation such that $\tuple*{q, \pi}$ lies in the corresponding BSCC.
As a trivial consequence of the theorem, this choice is optimal in terms of the state-space size of the generated automaton.
In our implementation, we start exploring the state space using an arbitrary initial permutation and then prune all states which do not belong into the respective BSCC.
\section{Experimental results}

In this section, we compare variants of our new approach to the established tools.
All of the benchmarks have been run on a Linux 4.4.3-gentoo x64 virtual machine with 3.0 GHz per core.
We implemented our construction as part of Rabinizer \cite{DBLP:conf/atva/KomarkovaK14} and used the 64 bit Oracle JDK 1.8.0\_102 as JVM for our experiments.

\subsection{DRA to DPA translation}

We present comparisons of different approaches to translate DRA into DPA.
As there are to our knowledge no \enquote{standard} DRA datasets for this kind of comparison, we use Spot's tool \texttt{randaut} to produce various Rabin automata.
All executions in this chapter ran with a time-out of five minutes.

We consider both our basic method IAR of \cref{section:index_appearance_record} and the optimized version IAR$^*$ of \cref{section:optimizations}.
We compare our methods to GOAL\footnote{\texttt{gc batch "\textbackslash\$nba = load -c HOAF /dev/stdin; \textbackslash\$dpa = convert -t dpw \textbackslash\$nba; save \textbackslash\$dpa -c HOAF /dev/stdout;"}, executed with \texttt{OpenJDK IcedTea 2.6.6, java version 1.7.0\_101}} \cite{DBLP:conf/cav/TsaiTH13} and the Streett-based construction StreetIAR of \cite{loding-thesis}.
As we are not aware of any implementations of StreetIAR, we implemented it ourselves in Haskell\footnote{\label{footnote:haskell_version}Compiled with GHC 7.10.3.}.
Both of these constructions are using state-based acceptance.
In order to allow for a fair comparison, we therefore also implemented sbIAR, a variant of our construction working directly with state-based acceptance\footnote{We also proved correctness for the direct construction, the proof can be obtained by trivial modifications of the proofs in this paper.} in Haskell\cref{footnote:haskell_version}, too.
Additionally, we combine every tool with Spot's multi-purpose post-processing\footnote{\texttt{autfilt -{}-deterministic -{}-generic -{}-small -{}-high}} and denote this by a subscript $P$ (for post-processing), \abbreveg IAR$^*$ combined with this post-processing is written IAR$^*_P$.

\smallskip

In \cref{tbl:experimental_dra_to_dpa_goal} we present a comparison between GOAL, StreettIAR and our unoptimized state-based implementation sbIAR. Additionally, since GOAL does not perform too well, we also include its post-processed variant GOAL$_P$.
For comparison, we also include our optimized variant $\IAR^*_P$. As test data, we use 1000 state-based DRA over 4 atomic propositions with 5 to 15 states, a transition density of $0.05$ and 2 to 3 Rabin pairs\footnote{\texttt{randaut 4 -{}-seed=0 -Q 5..15 -{}-acceptance="Rabin 2..3" -{}-density=0.05 -{}-deterministic -{}-acc-probability 0.2 -{}-state-based-acceptance -{}-hoaf -n1000}.
The acceptance probability parameter denotes the chance of a particular transition belonging to a Rabin pair.}.
We use Spot's tool \texttt{autfilt} to gather the statistics.
Failures denote either time-outs, out of memory errors or invalid results, \abbreveg automata which could not be read by \texttt{autfilt}, which sometimes occurred with GOAL.
\smallskip
\begin{table}[t]
	\caption{Comparison of the DRA to DPA translations on 1000 randomly generated DRAs.
		First, we compare the cases where all tools finished successfully, according to the average size, the number of SCCs and the run-time.
		Second, we give the percentage each tool produces an automaton with the least number of states, and failures, respectively.}
	\centering
	\begin{tabu} to 0.98\linewidth {llRRRRR}
		\toprule
		&          & GOAL & GOAL$_P$ & StreettIAR & sbIAR & $\IAR^*_P$ \\
		\midrule
		\multirow{3}{*}{\rotatebox[origin=c]{90}{avg}\enspace}
		& \#states & 1054 &      281 &       18.4 &  15.4 &       8.83 \\
		& \#SCC    & 73.2 &     19.2 &       4.97 &  4.33 &       1.61 \\
		& time {\scriptsize (s)}
		           & 11.7 &     15.7 &       0.02 &  0.02 &       0.99 \\
		\midrule
		\multicolumn{2}{l}{smallest {\scriptsize (\%)}}
		           & 15.5 &     37.8 &        7.7 &  15.5 &       95.9 \\
		\multicolumn{2}{l}{failure  {\scriptsize (\%)}}
		           &  8.6 &     11.9 &          0 &     0 &          0 \\
		\bottomrule
	\end{tabu}
	\label{tbl:experimental_dra_to_dpa_goal}
\end{table}

From the results in \cref{tbl:experimental_dra_to_dpa_goal} we observe that on this dataset all appearance-record variants drastically outperform GOAL.
We remark that IAR$^*$ performs even better compared to GOAL if more SCCs are involved.
However, for reasonably complex automata, virtually every execution of GOAL timed out or crashed, making more specific experiments difficult.
Already for the automata in \cref{tbl:experimental_dra_to_dpa_goal} with 5--15 states, GOAL regularly consumed around 3 GB of memory and needed roughly 10 seconds to complete on average, whereas our methods only used a few hundred MB and less than a second.
We could not find a dataset where GOAL showed a significant advantage over our new methods.
Therefore, we exclude GOAL from further experiments.
The remaining methods are investigated more thoroughly in the next experiment.

\smallskip
\begin{table}[t]
	\caption{Comparison of StreettIAR and (sb)IAR on 1000 randomly generated DRAs.
		We use the same definitions as in \cref{tbl:experimental_dra_to_dpa_goal}.}
	\centering
	\begin{tabu} to 0.98\linewidth {llRRRRR}
		\toprule
		&          &  StreettIAR & sbIAR & StreettIAR$^*_P$ & sbIAR$^*_P$ & $\IAR^*_P$ \\
		\midrule
		\multirow{3}{*}{\rotatebox[origin=c]{90}{avg}\enspace}
		& \#states &        4959 &  1568 &             4175 &        1081 &        833 \\
		& \#SCC    &        63.8 &  42.5 &             1.35 &        1.35 &       1.35 \\
		& time {\scriptsize (s)}
		           &        1.86 &  0.34 &            39.47 &        3.11 &       3.38 \\
		\midrule
		\multicolumn{2}{l}{smallest {\scriptsize (\%)}}
		           &           0 &     0 &              0.4 &        5.90 &       95.1 \\
		\multicolumn{2}{l}{failure  {\scriptsize (\%)}}
		           &         1.3 &     0 &              1.4 &           0 &          0 \\
		\bottomrule
	\end{tabu}
	\label{tbl:experimental_dra_to_dpa_sar}
\end{table}

In \cref{tbl:experimental_dra_to_dpa_sar} we compare StreettIAR to sbIAR on more complex input automata to demonstrate the advantages of our new method compared to the existing StreettIAR construction.
We consider the methods both in the basic setting and with post-processing and optimizations.
Note that as the presented optimizations are applicable to appearance records in general, we also added them to our implementation of StreettIAR.
Its optimized version is denoted by StreettIAR$^*$.
Again, we include our best (transition-based) variant $\IAR^*_P$ for reference.
The dataset now contains DRA with 20 to 30 states\footnote{\texttt{randaut 4 --seed=0 -Q 20..30 --acceptance="Rabin 6" --density=0.05 --acc-probability=0.2 --deterministic --state-based-acceptance --hoaf -n1000}}.

StreettIAR is significantly outperformed by our new methods in this experiment.
This is quite surprising, considering that both methods essentially follow the same idea of index appearance records, only from different perspectives.
The difference is partially due to \cref{rem:streett}.
Besides, we have observed that the discrepancy between StreettIAR and IAR is closely linked to the amount of acceptance pairs.
After increasing the number of pairs further, the gap between the two approaches grows dramatically.
For instance, on a dataset of automata with 8 states and 8 Rabin pairs, the IAR construction yielded automata roughly an order of magnitude smaller: sbIAR needed less than three hundred states compared to StreettIAR needing over three thousand.
Applying the post-processing does not remedy the situation.

\smallskip
\begin{table}[t]
	\caption{Evaluation of the presented optimizations on 1000 randomly generated DRAs, again using the same definitions as in \cref{tbl:experimental_dra_to_dpa_goal}.
	No tool failed for any of the input automata.}
	\centering
	\begin{tabu} to 0.98\linewidth {llRRRRRR}
		\toprule
		&          & sbIAR & sbIAR$^*_P$ & $\IAR$ & $\IAR_P$ & $\IAR^*$ & $\IAR^*_P$ \\
		\midrule
		\multirow{3}{*}{\rotatebox[origin=c]{90}{avg}\enspace}
		& \#states &  3431 &        2530 &   1668 &     1655 &     1302 &       1296 \\
		& \#SCC    &  24.8 &        1.14 &   8.98 &      3.5 &     1.43 &       1.43 \\
		& time {\scriptsize (s)}
		           &  0.77 &       11.47 &   1.09 &     48.3 &     76.5 &       95.7 \\
		           \midrule
		\multicolumn{2}{l}{smallest {\scriptsize (\%)}}
		           &     0 &           0 &   38.3 &    48.30 &     76.5 &       95.7 \\
		\bottomrule
	\end{tabu}
	\label{tbl:experimental_dra_to_dpa_opt}
\end{table}

Finally, we demonstrate the significance of the transition-based acceptance and our optimizations in \cref{tbl:experimental_dra_to_dpa_opt}.
To evaluate the impact of our improvements, we compare the unoptimized $\IAR$ procedure and its post-processed counterpart to the optimized $\IAR^*$ and $\IAR^*_P$.
Furthermore, we also include our state-based version in its basic (sbIAR) and best (sIAR$^*_P$\footnote{We use \texttt{autfilt -{}-state-based-acceptance} to convert the transition based input DRA to state based.}) form.
We run these algorithms on a dataset of DRA with 20 states each\footnote{\texttt{randaut 4 --seed=0 -Q 20 --acceptance="Rabin 5" --acc-probability=0.05 --density=0.1 --deterministic --hoaf -n1000}}.

Spot's generic post-processing algorithms often yield sizeable gains, but they are marginal compared to the effect of our optimizations on this dataset. Our optimizations are thus not only significantly beneficial, but also irreplacable by general purpose optimizations.
We furthermore want to highlight the reduction of SCCs.
As a final remark, we emphasize the improvements due to the adoption of transition-based acceptance, halving the size of the automata.

\subsection{Linear Temporal Logic}

Motivated by the previous results we concatenated $\IAR^*$ with Rabinizers LTL-to-DRA translation, obtaining an LTL-to-DPA translation.
We compare this approach to the established tool \texttt{ltl2tgba} of Spot, which can also produce DPA\footnote{By specifying \texttt{-{}-deterministic -{}-generic} on the command line}.
We use Spot's comparison tool \texttt{ltlcross} in order to produce the results.
Unfortunately, this tool sometimes crashes caused by too many acceptance sets\footnote{Around 20 acceptance sets. The exact error message emitted is\\\texttt{-terminate called after throwing an instance of 'std::runtime\_error'\\
~~what():  Too many acceptance sets used.}}.
We alleviated this problem by splitting our datasets into smaller chunks.
Time-outs are set to 15 minutes.

First, we compare the two tools on random LTL formulae.
We use \texttt{randltl} and \texttt{ltlfilt} to generate pure LTL formulae\footnote{\texttt{randltl -n2000 5 -{}-tree-size=20..25 --seed=0 -{}-simplify=3 -p -{}-ltl-priorities 'ap=3,false=1,true=1,not=1,F=1,G=1,X=1,equiv=1\\,implies=1,xor=0,R=0,U=1,W=0,M=0,and=1,or=1' | ltlfilt -{}-unabbreviate="eiMRW\^"}}.
The test results are outlined in \cref{tbl:rabinizer_vs_ltl2tgba_rand}.
On average, our methods are comparable to \texttt{ltl2tgba}, even outperforming it slightly in the number of states.

Note that the averages have to be compared carefully.
As the constructions used by \texttt{ltl2tgba} are fundamentally different from ours, there are some formulae where we outperform \texttt{ltl2tgba} by orders of magnitude and similarly in the other direction.
We conjecture that on some formulae \texttt{ltl2tgba} has an edge merely due to its rewriting together with numerous pre- and post-processing steps, whereas our method profits from Rabinizer, which can produce smaller deterministic automata also for very complex formulae.
On many dataset we tested, median state count over all formulae (including timeouts) is better for our methods.
For more detail, see the histogram in \cref{sec:app:data}, \cref{fig:app:graph}.

\begin{table}[t]
	\caption{Comparison of \texttt{ltl2tgba} to Rabinizer + $\IAR^*_P$ on 2000 LTL formulae.}
	\centering
	\begin{tabu} to 0.75\linewidth {llrp{5pt}r}
		\toprule
		&          & Rabinizer + $\IAR^*_P$ & & \texttt{ltl2tgba} \\
		\midrule
		\multirow{3}{*}{\rotatebox[origin=c]{90}{avg}\enspace}
		& \#states &                   6.60 & &            7.89 \\
		& \#acc    &                   2.31 & &            1.79 \\
		& \#SCC    &                   4.49 & &            4.69 \\
		& timeouts &                     22 & &               0 \\
	\end{tabu}
	\label{tbl:rabinizer_vs_ltl2tgba_rand}
\end{table}

To give more insight in the difference between the approaches, we list several classes of formulae where our technique performs particularly well.
For instance, for fairness-like constraints our toolchain produces significantly smaller automata than \texttt{ltl2tgba}, see \cref{tab:fair}.
Further examples, previously investigated in e.g. \cite{DBLP:conf/cav/KretinskyE12,DBLP:conf/atva/BabiakBKS13,DBLP:conf/cav/EsparzaK14} can be found in \cref{sec:app:data}, \cref{tab:app:fair}, including formulae of the GR(1) fragment \cite{DBLP:conf/vmcai/PitermanPS06}.
Additionally, our method is performing better on many practical formulae, for instance complex formulae from \textsc{Spec Pattern} \cite{DBLP:conf/icse/DwyerAC99}\footnote{Spec Patterns: Property Pattern Mappings for {LTL}. \url{http://patterns.projects.cis.ksu.edu/documentation/patterns/ltl.shtml}}.

\begin{table}[t]
	\caption{Fairness formulae: $\mathit{Fairness}(k)=\bigwedge_{i=1}^k (\G\F a_i\Rightarrow \G\F b_i)$}
	\label{tab:fair}
	\setlength{\tabcolsep}{4pt}
	\begin{tabu} to 0.9\textwidth {Lrrrcrrr}
		& \multicolumn{3}{c}{Rabinizer+$\IAR^*_P$} & & \multicolumn{3}{c}{\texttt{ltl2tgba}} \\
		Formula           & States & Acc. & SCCs & & States &  Acc. & SCCs \\
		\hline
		$\mathit{Fairness}(1)$
		& 2 & 4 & 1 & & 5 & 4 & 3 \\
		$\mathit{Fairness}(2)$
		& 12 & 9 & 1 & & 44 & 8 & 9 \\
		$\mathit{Fairness}(3)$
		& 1431 & 17 & 1 & & 8607 & 20 & 546 \\
		\hline
	\end{tabu}
\end{table}

\section{Conclusion}

We have presented a new version of index appearance record.
In comparison to the standard Streett-based approach, our new Rabin-based approach produces significantly smaller automata.
Besides, it has a significant potential for LTL synthesis.
For more complex formulae, it makes use of high efficiency of Rabinizer and thus avoids the blow-up in many cases, compared to determinization-based methods.

Since we only provided the method for DRA we want to further investigate whether it can be extended to DGRA more efficiently than by de-generalization.
Besides, a more targeted post-processing of the state space and the priority function is desirable.
For instance, in order to decrease the total number of used priorities, all non-accepting SCCs can be assigned any odd priority that is already required elsewhere instead of the one suggested by the algorithm.
Further, one can adopt optimizations of Spot as well as consider optimizations taking the automaton topology more into account.
The whole tool-chain will then be integrated into Rabinizer.
Finally, in order to estimate the effect on LTL synthesis more precisely, we shall link our tool chain to parity-game solvers and apply it to realistic case studies.

\vfill
\pagebreak
\bibliographystyle{alpha}
\bibliography{refs}

\newpage
\appendix
\section{Proofs}

\renewcommand{\baselinestretch}{1}

Throughout the section, we write $\mathcal{R}$ to denote an arbitrary Rabin automaton $\tuple*{Q, \Sigma, \delta, q_0, \set*{\tuple*{F_i, I_i}}_{i=1}^k}$, $\mathcal{P} = \fun{\IAR}{\mathcal{R}} = \tuple{\tilde{Q}, \Sigma, \tilde{\delta}, \tilde{q}_0, \lambda}$ and $\mathcal{P}^* = \fun{\IAR^*}{\mathcal{R}} = \tuple*{Q^*, \Sigma, \delta^*, q_0^*, \lambda^*}$ the result of applying $\IAR$ and $\IAR^*$ to $\mathcal{R}$, respectively.
Furthermore, whenever we have a transition $\tilde{t} = \trans{\tuple{q, \pi}}{a}{\tuple{q', \pi'}}$ in the $\mathcal{P}$, we use $t = \trans{q}{a}{q'}$ to denote the corresponding transition in the original Rabin automaton (analogously for $t^*$ in $\mathcal{P}^*$).

\subsection{Proof of \cref{stm:iar_correctness,stm:iar_independence_of_initial_permutation}} \label{section:appendix:iar_correctness}

We prove that after constructing $\mathcal{P}$ from $\mathcal{R}$, the language of \emph{any} state $q \in Q$ (\abbrevie the language accepted by the automaton starting at $q$) is equal to the language of \emph{any} state $\tuple{q, \pi} \in \tilde{Q}$.
This trivially implies \cref{stm:iar_correctness,stm:iar_independence_of_initial_permutation}.
We will reuse this result in another proof.

Given an $\omega$-automaton $\Automaton = \tuple*{Q, \Sigma, \delta, q_0, \alpha}$ and $q \in Q$ we write $\Automaton_q = \tuple*{Q, \Sigma, \delta, q, \alpha}$ to denote the automaton with new initial state $q$.
\begin{lemma}\label{stm:iar_correctness_extended}
	We have that $\Language(\mathcal{P}_{\tuple*{q', \pi'}}) = \Language(\mathcal{R}_{q'})$ for all $q' \in Q$ and $\pi' \in \Pi^k$ such that $\tuple*{q', \pi'} \in \tilde{Q}$.
\end{lemma}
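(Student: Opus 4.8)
The plan is to show language equality by establishing a tight correspondence between runs of $\mathcal{R}_{q'}$ and runs of $\mathcal{P}_{\tuple*{q', \pi'}}$, and then proving that the Rabin acceptance condition on the former run matches the parity condition on the latter. First I would observe that because the IAR construction only augments each state with a permutation component that is determined deterministically by the history of visited prohibited sets, the projection onto the first coordinate gives a bijection between runs: for any word $w$, the run $\tilde\rho$ of $\mathcal{P}_{\tuple*{q',\pi'}}$ on $w$ exists iff the run $\rho$ of $\mathcal{R}_{q'}$ on $w$ exists, and they visit the same sequence of original states and transitions. In particular $\fun{\Inf}{\rho}$ and the projection of $\fun{\Inf}{\tilde\rho}$ coincide, so $\ProhibitedInf$ and $\RequiredInf$ are well-defined from $\tilde\rho$ as well. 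This reduces the problem to: the run $\rho$ satisfies the Rabin condition $\set*{\tuple*{F_i,I_i}}_{i=1}^k$ iff the maximum priority emitted infinitely often by $\tilde\rho$ is even.

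The core of the argument is the priority analysis, for which I would lean on \cref{stm:iar_inf_visited_indices} and \cref{stm:smaller_indices_visited_inf_often}. Fix $w$ with run $\tilde\rho$, and let $P$ be the maximum priority occurring infinitely often. By \cref{stm:iar_inf_visited_indices}, after finitely many steps the positions of all finitely-often-visited prohibited sets stabilize, and every $F_i \in \ProhibitedInf$ occupies a strictly smaller position than every $F_j \notin \ProhibitedInf$ in all infinitely-often-visited states. Let me denote by $m$ the number of prohibited sets in $\ProhibitedInf$; positions $1,\dots,m$ are eventually occupied exactly by the pairs whose prohibited sets are seen infinitely often.

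For the two directions I would argue as follows. Suppose $\rho$ satisfies the Rabin condition via some pair $i$, so $I_i \in \RequiredInf$ and $F_i \notin \ProhibitedInf$. Then eventually the position $p$ of pair $i$ exceeds $m$, and since $F_i$ is never visited after stabilization, every infinitely-often-emitted transition whose $\maxInd$ equals $p$ lands in the second case of $\lambda$, emitting the even priority $2p$; moreover no odd priority $2p'+1$ with $p' \ge p$ is emitted infinitely often, because such a transition would require an infinitely-often-visited prohibited set at position $\ge p > m$, contradicting the stabilization. Transitions realizing $I_i$ at position $p$ do occur infinitely often, so the maximum infinitely-often priority is even. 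Conversely, suppose the Rabin condition fails, so for every $i$ either $I_i \notin \RequiredInf$ or $F_i \in \ProhibitedInf$. I would then show every infinitely-often-emitted even priority is dominated by a larger odd one: an even priority $2p$ requires visiting $I_{\pi(p)}\setminus F_{\pi(p)}$ at position $p$ infinitely often with no larger acceptance index seen, which forces $I_{\pi(p)} \in \RequiredInf$, hence $F_{\pi(p)} \in \ProhibitedInf$; by \cref{stm:iar_inf_visited_indices} the stabilized position of this prohibited set is some $p'$, and the second bullet in the proof of \cref{stm:iar_inf_visited_indices} (the position of $F_{\pi(p)}$ can only increase until it is revisited) gives that each visit of $I_{\pi(p)}$ yielding priority $2p$ is followed by a visit of $F_{\pi(p)}$ yielding the strictly greater odd priority $2p'+1$, emitted infinitely often. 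Thus $P$ is odd, matching non-acceptance. The case $\maxInd = 0$ emits priority $1$ and never contributes a maximal even value.

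The main obstacle I anticipate is the careful bookkeeping in the second direction: showing that the odd priority \emph{strictly dominates} the competing even priority requires precisely tracking how a pair's position evolves between a visit of its required set and the subsequent visit of its prohibited set, since the position may drift upward in between. The argument hinges on the monotonicity observation already isolated in \cref{stm:iar_inf_visited_indices} — that $F_i$'s position only increases until $F_i$ is itself revisited — so I would make sure to invoke it in the exact form needed rather than re-deriving it. A secondary subtlety is handling transitions that simultaneously lie in $F_i$ and $I_i$, which by the definition of $\lambda$ fall into the odd (third) case; I would verify these are treated consistently as prohibited-set visits throughout.
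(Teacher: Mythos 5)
Your proposal is correct and follows essentially the same route as the paper's own proof: a projection-based run correspondence, followed by a priority analysis that uses the position ordering from \cref{stm:iar_inf_visited_indices} for the accepting direction and the monotonicity of a prohibited set's position between visits of it for the other direction. The only slips are cosmetic — an infinitely-often-visited $F_{\pi(p)}$ has no stabilized position, and the dominating transition following a visit of $I_{\pi(p)}$ need not emit an \emph{odd} priority (its $\maxInd$ may exceed the pair's current position and land in a higher even case) — but strict dominance of the supposedly maximal priority already yields the contradiction, exactly as in the paper.
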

\begin{proof}
	Let $q' \in Q$ and $\pi' \in \Pi^k$ be as in the assumption and define $\tilde{q}' = \tuple{q', \pi'}$.
	We immediately see from the definition of $\IAR$ that for a word $w$, $\mathcal{R}_{q'}$ has a run $\rho$ on $w$ iff $\mathcal{P}_{\tilde{q}'}$ a run $\tilde{\rho}$ on it.
	Furthermore, for every $\rho_i = \trans*{q}{a}{q'}$ we have that $\tilde{\rho}_i = \trans{\tuple{q, \pi}}{a}{\tuple{q', \pi'}}$ for some $\pi, \pi' \in \Pi^k$.
	
	\smallskip
	\noindent \underline{$\Language(\mathcal{R}_{q'}) \subseteq \Language(\mathcal{P}_{\tilde{q}'})$}:
	Let $w \in \Language(\mathcal{R}_{q'})$ be a word accepted by the Rabin automaton $\mathcal{R}_{q'}$.
	Let $\rho$ and $\tilde{\rho}$ denote the runs of $\mathcal{R}_{q'}$ and $\mathcal{P}_{\tilde{q}'}$ on it, respectively.
	We show that any transition $\tilde{t} \in \fun*{\Inf}{\tilde{\rho}}$ with maximal priority (among all infinitely often visited transitions) has even priority and thus $w$ is also accepted by $\mathcal{P}_{\tilde{q}'}$.
	
	By assumption, there exists an accepting Rabin pair $\tuple{F_n, I_n}$, \abbrevie $I_n \in \fun*{\RequiredInf}{w}$, $F_n \notin \fun*{\ProhibitedInf}{w}$, and an infinitely often visited transition $\tilde{t}_n = \trans{\tuple{q_n, \pi_n}}{a}{\tuple{q_n', \pi_n'}}$ with $t_n \in (\fun*{\Inf}{\rho} \intersection I_n) \setminus F_n$.
	Hence, $\fun*{\pi_n^{-1}}{n} \leq \fun*{\maxInd}{\tilde{t}_n}$ by definition of $\maxInd$.
	
	Now, fix an arbitrary $\tilde{t} = \trans{\tuple{q, \pi}}{a}{\tuple{q', \pi'}} \in \fun*{\Inf}{\tilde{\rho}}$ with maximal $\fun*{\maxInd}{\tilde{t}}$ among all the infinitely often visited transitions, \abbrevie $\fun*{\maxInd}{\tilde{t}_n} \leq \fun*{\maxInd}{\tilde{t}}$.
	From \cref{stm:iar_inf_visited_indices} we know that the position of pair $n$ stays constant along the infinitely often visited states, \abbrevie $\fun*{\pi^{-1}}{n} = \fun*{\pi_n^{-1}}{n}$.
	Together, this yields $\fun*{\pi^{-1}}{n} = \fun*{\pi_n^{-1}}{n} \leq \fun*{\maxInd}{\tilde{t}_n} \leq \fun*{\maxInd}{\tilde{t}}$.
	
	Assume for contradiction that $\fun*{\lambda}{\tilde{t}}$ is odd, \abbrevie $\tilde{t} \in F_{\fun*{\pi}{\fun*{\maxInd}{\tilde{t}}}}$.
	By \cref{stm:smaller_indices_visited_inf_often} this yields $\set*{F_{\fun*{\pi}{i}} \mid i \leq \fun*{\maxInd}{\tilde{t}}} \subseteq \ProhibitedInf$.
	As we previously argued $\fun*{\pi^{-1}}{n} \leq \fun*{\maxInd}{\tilde{q}}$ and therefore $F_n \in \ProhibitedInf$, contradicting the assumption.
	
	\smallskip
	\noindent \underline{$\Language(\mathcal{P}_{\tilde{q}'}) \subseteq \Language(\mathcal{R}_{q'})$}:
	Let $w \in \Language(\mathcal{P}_{\tilde{q}'})$ be a word accepted by the constructed parity automaton.
	Again, denote the corresponding runs by $\rho$ and $\tilde{\rho}$.
	We show that there exists some $n$ where $F_n \notin \ProhibitedInf$ and $I_n \in \RequiredInf$, \abbrevie $\mathcal{R}_{q'}$ accepts $w$.
	
	By assumption, the maximal priority $\lambda$ of all infinitely often visited states is even.
	Let $\tilde{t} \in \fun*{\Inf}{\tilde{\rho}}$ be a state with $\fun*{\lambda}{\tilde{t}} = \lambda = 2 \cdot \fun*{\maxInd}{\tilde{t}}$, \abbrevie $\tilde{t}$ is a witness for the maximal priority.
	Defining $n = \fun*{\pi}{\fun*{\maxInd}{\tilde{t}}}$ we have that $t \in I_n \setminus F_n$ by definition of the priority assignment.
	Clearly, $t \in \fun*{\Inf}{\rho}$ and hence $I_n$ is visited infinitely often in the Rabin automaton (via $t$).
	We now show that $F_n$ is visited only finitely often.
	
	Assume the contrary, \abbrevie that $F_n$ is visited infinitely often.
	This implies that every time after taking $\tilde{t}$, some transition $\tilde{t}_F = \trans{\tuple{q_F, \pi_F}}{a}{\tuple{q_F', \pi_F'}}$ with $t_F \in F_n$ is eventually taken.
	After visiting $I_n$, the position of $F_n$ can not decrease until it is visited for the first time by definition of $\tilde{\delta}$.
	Hence for each visit of $\tilde{t}$ we can choose a $\tilde{t}_F$ such that $\fun*{\pi_F^{-1}}{n} \geq \fun*{\maxInd}{\tilde{t}}$.
	But then also $\fun*{\maxInd}{\tilde{t}_F} \geq \fun*{\maxInd}{\tilde{t}}$, as $q_F \in F_n$.
	Hence $\fun*{\lambda}{\tilde{q}_F} > \fun*{\lambda}{\tilde{q}}$, contradicting the assumption of $\fun*{\lambda}{\tilde{q}}$ being maximal.\qed
\end{proof}

\subsection{Proof of Complexity}\label{app:complexity}

\begin{proof}
	It has been shown \cite[Theorem 7]{DBLP:conf/fsttcs/Loding99} that there exists a family $\sequence{\Language_n}_{n \geq 2}$ of languages such that for every $n$ the language $\Language_n$ can be recognized by a DSA with $O(n)$ states and $O(n)$ pairs, but cannot be recognized by a DRA with less than $n! $ states.
	By interpreting the DSA as DRA for $\overline{\Language_n} = \Sigma^{\omega} \setminus \Language_n$, the statement also holds when transforming DRA to DPA.
	Let $\mathcal{R}$ be the Streett automaton interpreted as DRA, \abbrevie it accepts $\overline{\Language_n}$ with $O(n)$ states and $O(n)$ Rabin pairs.
	Assume for contradiction that a DPA $\mathcal{P}$ recognizes $\overline{\Language_n}$ with less than $n! $ states.
	One can easily verify that a DPA is complemented by using the priority function $\fun{\lambda'}{t} = \fun{\lambda}{t} + 1$.
	Applying this to $\mathcal{P}$ yields a DPA and thus a DRA recognizing $\Language_n$ with less than $n!$ states, a contradiction.\qed
\end{proof}

\subsection{Proof of \cref{stm:iar_star_correctness}} \label{section:appendix:iar_star_correctness}

In order to prove correctness of $\IAR^*$, we prove some auxiliary lemmas.
We use $\Pi_k = \Union_{i=0}^k \Pi^i$ to denote all permutations of length up to $k$ including the empty permutation $\varepsilon$.
With this, $Q^* \subseteq Q \times \Pi$.

First and foremost, we immediately see that the algorithm is well defined in the sense that $\delta^*$ only gets assigned at most one value for each pair $\tuple*{q^*, a}$ and $\lambda^*$ gets assigned exactly one value for each $q^* \in Q^*$.
Now we show that $\IAR^*$ emulates runs on the original automaton, \abbrevie every run of a Rabin automaton has a unique corresponding run in its $\IAR^*$ translation.
\begin{lemma}\label{stm:iar_star_simulates_rabin_run}
	$\mathcal{R}$ has a run $\rho$ on a word $w$ iff $\mathcal{P}^*$ has a run $\rho^*$ on it.
	For a given $w$, this $\rho^*$ is unique (if it exists) and for every $\rho^*_i = \trans{\tuple{q, \pi}}{a}{\tuple{q', \pi'}}$ we have that $\rho_i = \trans{q}{a}{q'}$ for some $\pi, \pi' \in \Pi$.
\end{lemma}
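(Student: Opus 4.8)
The plan is to establish a correspondence between runs of $\mathcal{R}$ and runs of $\mathcal{P}^* = \fun{\IAR^*}{\mathcal{R}}$ by tracing how \cref{alg:iar_star} constructs the transition function $\delta^*$. The key structural fact to exploit is that the algorithm never discards or adds any transition at the level of the first component: every transition $\trans{q}{a}{q'}$ of $\mathcal{R}$ gives rise in $\mathcal{P}^*$ to transitions of the form $\trans{\tuple{q,\pi}}{a}{\tuple{q',\pi'}}$, and conversely every $\delta^*$-transition projects down to a genuine $\delta$-transition. I would first make this projection precise by case analysis on the three places where $\delta^*$ is assigned a value in the algorithm, namely \cref{alg:iar_star:line:update_delta_star_in_trivial_scc} (transient or irrelevant SCCs), \cref{alg:iar_star:line:do_scc_iar} (relevant SCCs, where the inner $\fun{\IAR}{\cdot}$ preserves the projection by \cref{def:IAR}), and \cref{alg:iar_star:line:update_delta_star_with_inter_scc_connection} (inter-SCC edges). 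In each case the target state's first component is exactly $q' = \fun{\delta}{q,a}$, which immediately yields the projection property $\rho^*_i = \trans{\tuple{q,\pi}}{a}{\tuple{q',\pi'}} \implies \rho_i = \trans{q}{a}{q'}$.

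Next I would prove the two directions of the biconditional. For the forward direction, given a run $\rho$ of $\mathcal{R}$ on $w$, I would inductively build $\rho^*$ by showing that from any reachable state $\tuple{q,\pi} \in Q^*$ and any letter $a$ with $\fun{\delta}{q,a}$ defined, the value $\fun{\delta^*}{\tuple{q,\pi},a}$ is in fact defined. This requires checking that the algorithm assigns $\delta^*$ on \emph{some} outgoing edge for every such $q,a$: within an SCC this follows from the inner IAR construction covering all transitions of the sub-automaton $\mathcal{R} \upharpoonright_q S$, and for edges leaving the SCC it follows from the connecting loop at \cref{alg:iar_star:line:for_loop_inter_scc_connection}, which handles exactly the case $q' = \fun{\delta}{q,a}$ lying in a different SCC. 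The backward direction is then immediate from the projection property established above: any run $\rho^*$ of $\mathcal{P}^*$ projects componentwise to a valid run $\rho$ of $\mathcal{R}$.

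For uniqueness, I would argue that although $\mathcal{R}$ is deterministic in $Q$, the constructed $\mathcal{P}^*$ is also deterministic on its reachable part: the algorithm assigns $\fun{\delta^*}{\tuple{q,\pi},a}$ at most one value (as already noted just before the lemma statement, $\delta^*$ is well defined). Hence, starting from a fixed initial state $q_0^*$, the successor at each step is forced, so $\rho^*$ is uniquely determined by $w$. I expect the main obstacle to lie not in the projection or uniqueness, which are essentially bookkeeping, but in verifying \emph{totality} of $\delta^*$ along a realizable run, i.e. that the inter-SCC connecting step of \cref{alg:iar_star:line:pick_pi_prime_for_inter_scc_connection} always finds a suitable $\tuple{q',\pi'} \in Q^*$ to target. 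This needs the guarantee that the target SCC has already contributed at least one state with first component $q'$ to $Q^*$, which holds because the construction processes \emph{every} SCC (adding states for transient/irrelevant ones via \cref{alg:iar_star:line:update_q_star_in_trivial_scc} and for relevant ones via \cref{alg:iar_star:line:do_scc_iar}, where \cref{stm:iar_star_subautomaton_bscc_unique} ensures all of $Q$ appears in the first component of the BSCC). Making this coverage argument airtight is the delicate part of the proof.
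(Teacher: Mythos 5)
Your proposal matches the paper's proof essentially step for step: the backward direction and the projection property by inspecting the three assignment sites of $\delta^*$, the forward direction by showing the run cannot get stuck (intra-SCC via the inner $\IAR$ on $\mathcal{R}\upharpoonright_q S$, inter-SCC via the connecting loop), and uniqueness via the fact that each $(\tuple*{q,\pi},a)$ receives at most one successor. The only cosmetic difference is that you invoke \cref{stm:iar_star_subautomaton_bscc_unique} to guarantee the inter-SCC target exists, where the paper relies on the simpler observation that every SCC is processed and thus contributes some $\tuple*{q',\pi'}$ to $Q^*$ for each of its states $q'$.
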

\begin{proof}
	We first establish existence of corresponding runs.
	
	\smallskip
	\noindent \underline{$\Rightarrow$}: We show that for all $q, q' \in Q$ and $a \in \Sigma$ with $q' = \fun{\delta}{q, a}$ and for every $\pi \in \Pi$ such that $\tuple{q, \pi}$ is reachable from $q^*_0$, there is a $\pi' \in \Pi$ with $\tuple*{q', \pi'} = \fun{\delta^*}{\tuple*{q, \pi}, a}$, \abbrevie the run of $\mathcal{P}^*$ cannot get \enquote{stuck}.
	
	Assume for contradiction that there are such $q$, $\pi$, $a$ and $q'$ where $\fun{\delta^*}{\tuple*{q, \pi}, a}$ is undefined.
	As by assumption $\tuple*{q, \pi}$ is reachable in the resulting automaton, it has been added to $Q^*$ while processing a particular SCC $S$ of the original automaton.
	We distinguish multiple cases:
	\begin{itemize}
		\item $q' \notin S$: While processing all SCCs, the algorithm eventually adds $\tuple*{q', \pi'}$ with some $\pi'$ to $Q^*$.
		As $q$ and $q'$ belong to different SCCs and $q' = \fun{\delta}{q, a}$ by assumption, \cref{alg:iar_star:line:update_delta_star_with_inter_scc_connection} is visited with these particular values and $\fun{\delta^*}{\tuple*{q, \pi}, a} = \tuple*{q', \pi'}$.
		This yields the contradiction.
		\item $q' \in S$: By definition we have that $q' = \fun{(\delta \upharpoonright S)}{q, a}$.
		Again, we do a case distinction:
		\begin{itemize}
			\item $\set*{i \mid I_i \intersection (\Delta \upharpoonright S) = \emptyset} = \emptyset$: Only $\tuple*{q, \varepsilon}$ is added to $Q^*$ (in \cref{alg:iar_star:line:update_q_star_in_trivial_scc}) and the corresponding transition is added in \cref{alg:iar_star:line:update_delta_star_in_trivial_scc}.
			\item Otherwise: The algorithm invokes $\IAR$ on the sub-automaton (containing the transition $\trans{q}{a}{q'}$). The definition of $\IAR$ immediately gives a contradiction.
		\end{itemize}
	\end{itemize}
	By a simple inductive argument, existence of the run on a particular word $w$ follows.
	
	\smallskip
	\noindent \underline{$\Leftarrow$}: By investigating the algorithm, one immediately sees that if $\fun{\delta^*}{\tuple*{q, \pi}, a}$ is assigned some value $\tuple*{q', \pi'}$, $q' = \fun{\delta}{q, a}$ is a precondition to that.
	
	\medskip
	
	We now prove uniqueness of the run.
	Assume for contradiction that for some $\tuple*{q, \pi}$ and letter $a$ the algorithm added both $\tuple*{q', \pi'}$ and $\tuple*{q'', \pi''}$ as successors with $\pi' \neq \pi''$ or $q' \neq q''$.
	Using the same reasoning as in the \enquote{$\Leftarrow$} proof, we see that $q' = q''$.
	Let now $S$ be the SCC in the original DRA with $q \in S$.
	Again, we use a case distinction:
	\begin{itemize}
		\item $q' \in S$: In the \enquote{if}-branch only one successor is added by the algorithm.
		In the \enquote{else}-branch, the statement follows from the definition of $\IAR$.
		\item $q' \notin S$: The for loop connecting the SCCs together is entered exactly once with the variables $q$, $q'$ and $a$.
		The algorithm picks any $\pi'$ as the permutation of the successor, but only this single one.\qed
	\end{itemize}
\end{proof}
From this we get as immediate consequences that $\IAR^*$ indeed outputs a deterministic parity automaton and yields a total automaton if $\mathcal{R}$ was total.
Furthermore, we show that every SCC in the result corresponds to a subset of a SCC in the original automaton.
In other words, it cannot be the case that a SCC in the resulting automaton contains states corresponding to states in the Rabin automaton in two different SCCs.
\begin{corollary} \label{stm:iar_star_scc_correspondence}
	For any SCC $S^* \subseteq Q^*$ in $\mathcal{P}^*$, we have that its projection $\set*{q \in Q \mid \Exists \pi \in \Pi_k. \tuple*{q, \pi} \in S^*}$ to $\mathcal{R}$ is a subset of some SCC in $\mathcal{R}$.
\end{corollary}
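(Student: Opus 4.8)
The plan is to prove this as a direct consequence of \cref{stm:iar_star_simulates_rabin_run}, which already establishes that every transition $\rho^*_i = \trans{\tuple{q,\pi}}{a}{\tuple{q',\pi'}}$ in $\mathcal{P}^*$ has a corresponding transition $\trans{q}{a}{q'}$ in $\mathcal{R}$. The key observation is that strong connectivity is preserved under this projection: if two states $\tuple{q_1,\pi_1}$ and $\tuple{q_2,\pi_2}$ lie in the same SCC $S^*$ of $\mathcal{P}^*$, then there are paths between them in both directions, and projecting these paths away the permutation component yields paths between $q_1$ and $q_2$ in both directions in $\mathcal{R}$.

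First I would let $S^*$ be an arbitrary SCC of $\mathcal{P}^*$ and write $P = \set*{q \in Q \mid \Exists \pi \in \Pi_k.\ \tuple*{q, \pi} \in S^*}$ for its projection. I would pick any two elements $q_1, q_2 \in P$, witnessed by states $\tuple{q_1, \pi_1}, \tuple{q_2, \pi_2} \in S^*$. Since $S^*$ is strongly connected, there is a finite path in $\mathcal{P}^*$ from $\tuple{q_1,\pi_1}$ to $\tuple{q_2,\pi_2}$. Applying the transition-correspondence from \cref{stm:iar_star_simulates_rabin_run} to each edge of this path, I obtain a finite path in $\mathcal{R}$ from $q_1$ to $q_2$, and the intermediate states are all elements of $P$ since the whole $\mathcal{P}^*$-path stays inside $S^*$. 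This shows $P$ is a strongly connected subset of $Q$.

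It then remains to conclude that a strongly connected subset of states is contained in a single SCC of $\mathcal{R}$. This is immediate from the definition of SCC as the equivalence classes of the mutual-reachability relation: any two mutually reachable states lie in the same class, so a set in which every pair is mutually reachable lies wholly within one such class. Hence $P$ is a subset of some SCC of $\mathcal{R}$, as claimed.

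I do not anticipate a genuine obstacle here, as the statement is essentially a corollary packaging the transition-level correspondence of \cref{stm:iar_star_simulates_rabin_run} at the level of paths and then SCCs. The only point requiring mild care is the degenerate case where $S^*$ is a singleton (so $P$ is a single state, trivially inside its SCC), which is covered by the convention that every state is reachable from itself.
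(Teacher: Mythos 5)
Your proposal is correct and matches the paper's own (very terse) proof, which likewise projects cycles/paths of $\mathcal{P}^*$ down to $\mathcal{R}$ via the transition correspondence of \cref{stm:iar_star_simulates_rabin_run} and concludes that the projected set is strongly connected, hence contained in a single SCC. Your version merely spells out the path-level details that the paper leaves implicit.
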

\begin{proof}
	Consider an arbitrary cycle in $\mathcal{P}^*$.
	Projecting the cycle to $\mathcal{R}$ again results in a cycle by \cref{stm:iar_star_simulates_rabin_run}.
\end{proof}
As a last lemma, we prove that $\mathcal{R} \upharpoonright_q S$ behaves as it should.

\begin{lemma} \label{stm:rabin_restriction_correct}
	Let $w$ be a word such that $\mathcal{R}$ has a run $\rho$ on it.
	Let $S$ be the SCC containing $\fun{\Inf}{w}$ and pick an arbitrary $q \in S$.
	Fix $j \in \Naturals$ such that $\rho_i \in \Delta \upharpoonright S$ for all $i \geq j$.
	Then $w$ is accepted by $\mathcal{R}$ iff $w' = w_j w_{j+1} \dots$ is accepted by $(\mathcal{R} \upharpoonright_q S)_{\rho_j}$.
\end{lemma}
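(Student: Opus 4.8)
The plan is to reduce everything to the fact that acceptance of a Rabin automaton depends only on its set of infinitely often visited transitions, and then to show that this set, together with its intersection pattern with every relevant $F_i$ and $I_i$, is preserved when passing to the restricted automaton. First I would record the basic containment. Since $S$ is the SCC containing the strongly connected set $\fun{\Inf}{w} = \fun{\Inf}{\rho}$, every infinitely often visited transition has both of its endpoints in $S$, and therefore $\fun{\Inf}{\rho} \subseteq \Delta \upharpoonright S$. This is exactly what guarantees that a $j$ with $\rho_i \in \Delta \upharpoonright S$ for all $i \geq j$ exists in the first place, so the hypothesis of the lemma is meaningful.

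Next I would identify the two runs. Writing $\rho'$ for the run of $(\mathcal{R} \upharpoonright_q S)_{\rho_j}$ on $w'$, I claim $\rho'_i = \rho_{j+i}$ for all $i \geq 0$. This follows by a straightforward induction on $i$ using determinism: the re-initialization to $\rho_j$ makes the two runs start at the same state, and whenever $\rho_{j+i} \in \Delta \upharpoonright S$ the restricted transition function $\delta \upharpoonright S$ returns the same successor under $w_{j+i}$ as $\delta$ does, so the restricted run can neither deviate nor get stuck. Since $\rho'$ and the suffix $\rho_j \rho_{j+1} \dots$ differ from $\rho$ only by a finite prefix and consist entirely of transitions in $\Delta \upharpoonright S$, I obtain $\fun{\Inf}{\rho'} = \fun{\Inf}{\rho}$.

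Finally I would match the acceptance conditions. Recall $w$ is accepted by $\mathcal{R}$ iff there is an index $n$ with $\fun{\Inf}{\rho} \intersection F_n = \emptyset$ and $\fun{\Inf}{\rho} \intersection I_n \neq \emptyset$. For any pair surviving the restriction (i.e.\ $I_i \intersection (\Delta \upharpoonright S) \neq \emptyset$), the containment $\fun{\Inf}{\rho} \subseteq \Delta \upharpoonright S$ gives $\fun{\Inf}{\rho'} \intersection (F_i \intersection (\Delta \upharpoonright S)) = \fun{\Inf}{\rho} \intersection F_i$ and likewise for $I_i$, so the Rabin condition of the restricted pair is satisfied by $\rho'$ exactly when the original pair accepts $\rho$. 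For any pair discarded by the definition of $\mathcal{R} \upharpoonright_q S$ (i.e.\ $I_i \intersection (\Delta \upharpoonright S) = \emptyset$), the same containment forces $\fun{\Inf}{\rho} \intersection I_i = \emptyset$, so such a pair can never witness acceptance of $w$ in $\mathcal{R}$ and its removal is harmless. Combining both directions yields the claimed equivalence.

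The only place requiring care is this last bookkeeping step: one must verify that intersecting each $F_i$ and $I_i$ with $\Delta \upharpoonright S$ alters neither side of the Rabin condition along this run, and that precisely the pairs dropped by $\mathcal{R} \upharpoonright_q S$ are those already non-accepting on $\rho$. I expect this to be the main — though still routine — obstacle; the run-correspondence argument is entirely standard and rests only on determinism and the agreement of $\delta \upharpoonright S$ with $\delta$ on internal transitions.
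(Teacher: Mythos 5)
Your proof is correct and follows essentially the same route as the paper's: identify the restricted run as the suffix $\rho_j\rho_{j+1}\dots$ so that the sets of infinitely often visited transitions coincide, then use $\mathrm{Inf}(\rho)\subseteq\Delta\upharpoonright S$ to show that an accepting pair survives the restriction and accepts there, and conversely. You merely spell out the run correspondence and the backward direction, which the paper dismisses as immediate and trivial respectively.
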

\begin{proof}
	Fix $w$, $\rho$, $S$, $q$, $j$ and $w'$ as stated.
	One immediately sees that $(\mathcal{R} \upharpoonright_{q} S)_{\rho_j}$ has a run $\rho' = \rho_j \rho_{j+1} \dots$ on $w'$ and $\fun{\Inf}{\rho} = \fun{\Inf}{\rho'}$.
	Hence we only need to show that there are pairs in both automata accepting the respective runs.
	
	\smallskip
	\noindent \underline{$\Rightarrow$}: As $w$ is accepted by $\mathcal{R}$ there is an accepting Rabin pair $\tuple*{F_i, I_i}$.
	By assumption, $\fun{\Inf}{\rho} \subseteq \Delta \upharpoonright S$ and $\fun{\Inf}{\rho} \intersection I_i \neq \emptyset$.
	Hence, $I_i \intersection (\Delta \upharpoonright S) \neq \emptyset$ and $\tuple*{F_i \intersection (\Delta \upharpoonright S), I_i \intersection (\Delta \upharpoonright S)}$ is a pair of the restricted automaton accepting $\rho'$.
	
	\smallskip
	\noindent \underline{$\Leftarrow$}: Trivial.\qed
\end{proof}
With these results, we prove the correctness of the algorithm.
\begin{proof}[of \cref{stm:iar_star_correctness}]
	Let $w \in \Sigma^\omega$ be an arbitrary word.
	By \cref{stm:iar_star_simulates_rabin_run}, we have that $\mathcal{R}$ has a run $\rho$ on $w$ iff $\mathcal{P}^*$ has a run $\rho^*$ on it.
	Assume \abbrevwlog that both automata indeed have such runs (otherwise $w$ trivially is not accepted by neither automata).
	Let $S$ and $S^*$ be the SCCs containing $\fun{\Inf}{\rho}$ and $\fun{\Inf}{\rho^*}$, respectively.
	We further assume \abbrevwlog that $\set*{i \mid I_i \intersection (\Delta \upharpoonright S) \neq \emptyset} \neq \emptyset$, otherwise $w$ trivially is not accepted by neither of the automata, as both of them only generate uneven priorities infinitely often.
	
	By virtue of \cref{stm:iar_star_scc_correspondence}, the SCC $S^*$ is constructed while processing $S$ in the main loop, \abbrevie it is a SCC of $\mathcal{P}' = \fun{\IAR}{\mathcal{R}', \fun{\pickPerm}{q, S}}$ where $\mathcal{R}' = \mathcal{R} \upharpoonright_q S$.
	As both runs eventually remain in the respective SCCs, there is a $j \in \Naturals$ such that $\rho_i \in \Delta \upharpoonright S$ and $\rho^*_i \in \Delta^* \upharpoonright S$ for all $i \geq j$.
	By \cref{stm:rabin_restriction_correct} we have that $w' = w_j w_{j+1} \dots$ is accepted by $\mathcal{R}'$ iff $w$ is accepted by $\mathcal{R}$.
	Furthermore, employing \cref{stm:iar_correctness_extended} we have $w'$ is accepted by $\mathcal{P}'_{\rho^*_j}$ iff it is accepted by $\mathcal{R}'$.
	By construction, $w$ is accepted by $\mathcal{P}^*$ iff $w'$ is accepted by $\mathcal{P}'_{\rho^*_j}$.
	Together, this yields that $w$ is accepted by $\mathcal{R}$ iff it is accepted by $\mathcal{P}^*$. \qed
\end{proof}
\subsection{Proof of \cref{stm:iar_star_subautomaton_bscc_unique}} \label{section:appendix:bscc_unique}
For increased readability, we extend the notion of runs to finite words: We write $\Sigma^*$ to denote the set of all finite words over a given alphabet $\Sigma$.
The length of such a word $w$ is denoted by $\length{w}$.
We say that an automaton $\Automaton = \tuple*{Q, \Sigma, \delta, q_0, \alpha}$ has the run $\rho$ on a finite word $w \in \Sigma^*$ starting from $q$ if $\rho = \tuple*{\rho_0, \rho_1, \dots, \rho_{\length{w}}}$ where $\rho_0$ starts at $q_0$, $\rho_i$ moves under $w_i$ for all $0 \leq i < \length{w}$ and $\rho_{i}$ starts at the same state as $\rho_{i-1}$ ends for all $0 < i < \length{w}$.
\begin{proof}
	If either $\cardinality{Q} = 1$ or $k \leq 1$, the graph of the constructed automaton is isomorphic to the graph of the input automaton.
	Hence assume that $\cardinality{Q} > 1$ and $k > 1$.
	For now, fix an arbitrary initial permutation $\pi_0$.
	Note that the relative ordering of all positions of pairs with empty prohibited sets stays the same for all runs as they are never visited.
	
	\medskip
	
	We first show existence of BSCCs.
	As $Q$ is assumed to be strongly connected, every state in $\mathcal{R}$ necessarily has a successor.
	From the definition of $\IAR$ it immediately follows that firstly $\mathcal{P}$ contains at least one SCC and secondly that there can be no state without a successor, which implies that there always exists a BSCC.
	
	\medskip
	
	Next, we show both uniqueness and the given characterization of BSCCs.
	Let $S$ be a BSCC of $\mathcal{P}$, $S'$ a SCC with $Q = \set*{q \mid \Exists \pi \in \Pi^k. \tuple*{q, \pi} \in S'}$ and $S \neq S'$. This implies that $S \intersection S' = \emptyset$, as different SCCs are disjoint by definition.
	Note that $S'$ may also be another BSCC.
	
	As we assumed that $\cardinality{Q} > 1$ we also have $\cardinality{S'} > 1$.
	Intuitively, this means that the graph corresponding to $S'$ contains a path which visits all states in $S'$.
	Formally, for each state ${q^*}' \in S'$ we can find a finite word $w$ such that the run of $\mathcal{P}$ on $w$ starting from ${q^*}' \in S'$ visits each state in $S$ at least once and ends in ${q^*}'$.
	
	Fix any ${q^*}' = \tuple*{q, \pi'} \in S'$ and let $w$ be such a word.
	From the definition of $\IAR$, we can conclude that each BSCC contains all states of the original automaton (as we assumed that $Q$ is a SCC).
	Hence, choose some $\pi \in \Pi^k$ such that $q^* = \tuple*{q, \pi} \in S$.
	After following the word $w$ on $\mathcal{P}$ starting from the two states $q^*$ and ${q^*}'$, we arrive at $\tuple*{q, \overline{\pi}} \in S$ and $\tuple*{q, \overline{\pi}'} \in S'$, respectively.
	But, as every state and thus every non-empty $F_i$ was visited along the path, we have that $\overline{\pi} = \overline{\pi}'$.
	Therefore $\tuple*{q, \overline{\pi}} = \tuple*{q, \overline{\pi}'}$ and hence $S \intersection S' \neq \emptyset$, contradicting the assumption.
	
	\medskip
	
	Finally, we show that the BSCC is essentially unique independently of $\pi_0$: Choose $\pi_0 \in \Pi^k$ arbitrary and let $S$ be the BSCCs of $\mathcal{P} = \fun{\IAR}{\mathcal{R}, \pi_0}$.
	Clearly, for each $\tuple*{q, \pi} \in S$ the positions of all empty $F_i$ in $\pi$ is bigger than any of the visited ones, \abbrevie $\fun{\pi^{-1}}{j} < \fun{\pi^{-1}}{i}$ for all $i, j$ \abbrevst $F_j \neq \emptyset$, $F_i = \emptyset$.
	Assume for contradiction that there is some state $\tuple{q, \pi} \in S$ where $\fun{\pi^{-1}}{j} > \fun{\pi^{-1}}{i}$.
	As $Q$ is a SCC and $F_j$ is not empty by assumption, there exists a run which visits $\tuple{q, \pi}$ and then visits some $\tuple{q', \pi'}$ with $q' \in F_j$.
	By definition, $j$ is moved to the front and can never be moved to behind $i$.
	Hence, after visiting $\tuple{q', \pi'}$ no run can visit $\tuple{q, \pi}$ again, contradicting the assumption that $\tuple{q, \pi}$ and $\tuple{q', \pi'}$ are contained in the same SCC.
	
	Let now $\pi_0' \in \Pi^k$ arbitrary and let $S'$ be the BSCCs of $\mathcal{P}' = \fun{\IAR}{\mathcal{R}, \pi_0'}$.
	We prove by contradiction that $S$ and $S'$ are the same up to their relative ordering of the empty $F_i$ (which trivially always stays the same).
	Assume \abbrevwlog that there is some $q^* = \tuple*{q, \pi} \in S$ such that the ${q^*}' = \tuple*{q, \pi'}$ obtained by reordering the empty prohibited sets in $\pi$ is not in $S'$.
	We can pick a finite word $w$ such that the run of $\mathcal{P}$ on $w$ visits $q^*$, then visits all states in $S$ and finally ends in $q^*$.
	The run of $\mathcal{P}'$ on $w$ ends in some $\tuple*{q, \pi''}$.
	By now, after visiting $q^*$ every non-empty $F_i$ has been visited by the run of $\mathcal{P}'$.
	Additionally, we have shown that the empty prohibited sets are positioned at the end of every $\pi$ in the BSCC.
	Hence the order of all the non-empty $F_i$ in $\pi''$ is determined by the run after its visit of $q^*$.
	In other words, we have that $\pi'' = \pi'$, yielding a contradiction.\qed
\end{proof}
\vfill
\pagebreak
\section{More testing data}\label{sec:app:data}

\renewcommand{\G}{\textbf{G}}
\renewcommand{\F}{\textbf{F}}
\renewcommand{\X}{\textbf{X}}
\renewcommand{\U}{\textbf{U}}

\vspace{-1em}

\begin{table}[h]
	\centering
	\caption{Fairness and GR(1) formulae: \\
	$\mathit{Fairness}(k)=\bigwedge_{i=1}^k (\G\F a_i\Rightarrow \G\F b_i)$, $\mathit{Chained}(k)=\bigwedge_{i=1}^k (\G\F a_i\Rightarrow \G\F a_{i+1})$, $GR(1)_k=\big(\bigwedge_{i=1}^k \G\F a_i\big)\Rightarrow \big(\bigwedge_{i=1}^k \G\F b_i\big)$. \texttt{ltl2tgba} timed out on the last formula.}
	\label{tab:app:fair}
	\setlength{\tabcolsep}{4pt}
	\begin{tabu} to 0.98\textwidth {Lrrrp{12pt}rrr}
		\toprule
		& \multicolumn{3}{c}{Rab+$\IAR^*_P$} & & \multicolumn{3}{c}{\texttt{ltl2tgba}} \\
		Formula           & States & Acc. & SCCs & & States & Acc. & SCCs \\
		\midrule
		$\mathit{Fairness}(1)$
		& 2 & 4 & 1 & & 5 & 4 & 3 \\
		$\mathit{Fairness}(2)$
		& 12 & 9 & 1 & & 44 & 8 & 9 \\
		$\mathit{Fairness}(3)$
		& 1431 & 17 & 1 & & 8607 & 20 & 546 \\
		\midrule
		$\mathit{Chained}(2)$
		& 6 & 6 & 1 & & 15 & 6 & 3 \\
		$\mathit{Chained}(3)$
		& 78 & 11 & 1 & & 128 & 8 & 9 \\
		\midrule
		$\mathit{GR(1)}_1$ & 2 & 4 & 1 & & 5 & 4 & 3 \\
		$\mathit{GR(1)}_2$ & 6 & 4 & 1 & & 13 & 4 & 5 \\
		$\mathit{GR(1)}_3$ & 24 & 4 & 1 & & 52 & 4 & 9 \\
		$\mathit{GR(1)}_4$ & 120 & 4 & 1 & & 265 & 4 & 17 \\
		$\mathit{GR(1)}_5$ & 720 & 4 & 1 & & 1636 & 4 & 33 \\
		$\mathit{GR(1)}_6$ & 5040 & 4 & 1 & & - & - & - \\
		\bottomrule
	\end{tabu}
\end{table}%
\begin{table}[h]
	\vspace{-1em}
	\centering
	\caption{Spec patterns: \enquote{after Q until R} properties}
	\label{tab:app:sp}
	\setlength{\tabcolsep}{2pt}
	\begin{tabu} to 0.98\textwidth {Lrrrp{12pt}rrr}
		\toprule
		& \multicolumn{3}{c}{Rab+$\IAR^*_P$} & & \multicolumn{3}{c}{\texttt{ltl2tgba}} \\
		Formula           & States & Acc. & SCCs & & States & Acc. & SCCs \\
		\midrule
		\multicolumn{8}{p{0.98\textwidth}}{$\G(\neg q \vee ((r \vee \neg s \vee (\X((\G(r \vee \neg t)) \vee (\neg r \U (r \wedge (r \vee \neg t)))))) \U (p \vee r)) \vee (\G(\neg s \vee (\X(\G\neg t)))))$}\\
			& 4 & 2 & 1 & & 27 & 6 & 1 \\[0.25em]
		\multicolumn{8}{p{0.98\textwidth}}{$\G(\neg q \vee (\G p) \vee (\neg p \U (r \vee (\neg p \wedge s \wedge (\X(\neg p \U t))))))$}\\
			& 9 & 3 & 4 & & 19 & 6 & 7 \\[0.25em]
		\multicolumn{8}{p{0.98\textwidth}}{$\G(\neg q \vee ((\neg p \vee (\neg r \U (\neg r \wedge s \wedge (\X(\neg r \U t))))) \U (r \vee (\G(\neg p \vee (s \wedge (\X(\F t))))))))$}\\
			& 13 & 6 & 3 & & 11 & 6 & 2 \\[0.25em]
		\multicolumn{8}{p{0.98\textwidth}}{$\G(\neg q \vee ((\neg s \vee (\X((\G\neg t) \vee (\neg r \U (r \wedge \neg t)))) \vee (\X(\neg r \U (r \wedge (\F p))))) \U (r \vee (\G(\neg s \vee (\X((\G\neg t) \vee (\neg r \U (r \wedge \neg t)))) \vee (\X(\neg r \U (t \wedge (\F p)))))))))$}\\
			& 79 & 10 & 2 & & 282 & 8 & 4 \\
		\bottomrule
	\end{tabu}
	\vspace{-2em}
\end{table}

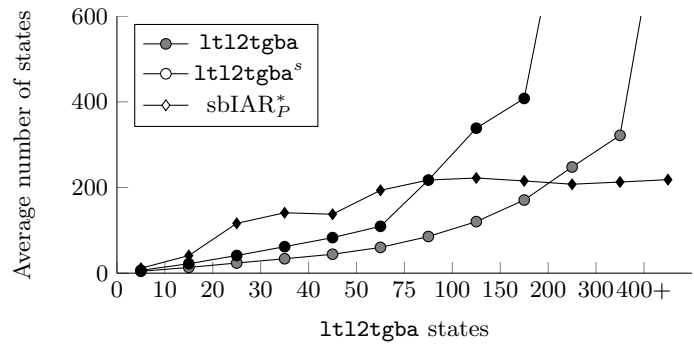
\begin{figure}[t]
	\vspace*{-2em}
	\centering
	\begin{tikzpicture}
		\begin{axis}[
				xmin=0, xmax=12,
				ymin=0, ymax=600,
				axis y line*=left,
				axis x line*=bottom,
				width=0.75\textwidth,
				height=5cm,
				xticklabels={0,10,20,30,40,50,75,100,150,200,300,400+},
				xtick={0,...,11},
				ylabel=Average number of states,
				xlabel=\texttt{ltl2tgba} states,
				legend pos=north west,
				legend style={draw=black,
				fill=white,align=left},
				cycle list name=black white]
				\addplot+[mark=*] coordinates {
			(0.5,   4.35)
			(1.5,  13.39)
			(2.5,  23.73)
			(3.5,  33.70)
			(4.5,  44.18)
			(5.5,  60.16)
			(6.5,  85.62)
			(7.5, 120.37)
			(8.5, 170.80)
			(9.5, 247.92)
			(10.5, 321.83)
			(11.5, 959.00)
			};
			\addplot+[mark=*,mark options=solid] coordinates {
			(0.5,   6.76)
			(1.5,  22.00)
			(2.5,  41.28)
			(3.5,  61.86)
			(4.5,  83.02)
			(5.5, 109.37)
			(6.5, 217.41)
			(7.5, 338.54)
			(8.5, 408.00)
			(9.5, 975.62)
			(10.5, 2070.83)
			(11.5, 2854.84)
			};
			\addplot+[mark=diamond*,mark options=solid] coordinates {
			(0.5,  11.47)
			(1.5,  41.13)
			(2.5, 116.34)
			(3.5, 141.05)
			(4.5, 137.66)
			(5.5, 193.51)
			(6.5, 217.51)
			(7.5, 222.27)
			(8.5, 215.40)
			(9.5, 207.69)
			(10.5, 212.83)
			(11.5, 218.37)
			};
			\legend{\texttt{ltl2tgba}, \texttt{ltl2tgba}$^s$, sbIAR$^*_P$}
		\end{axis}
	\end{tikzpicture}
	\caption{Comparing \emph{state based} IAR to \texttt{ltl2tgba} on a dataset of 5000 formulae, grouped by the amount of states \texttt{ltl2tgba} constructs.
	\texttt{ltl2tgba}$^s$ denotes the conversion of \texttt{ltl2tgba}'s output to state based acceptance.
	The time out was set to five minutes.
	Our observation was that prolonging the time out resulted in finishing more benchmarks, which were not significantly larger than the previously obtained average. Therefore, benchmarks where one of the two tools timed out are simply not considered here.
	Further, note that the x-axis is not linear.
	The missing data points of \texttt{ltl2tgba}$^s$ are 976, 2071 and 2855, for \texttt{ltl2tgba} it's 959.}
	\label{fig:app:graph}
\end{figure}

\end{document}